\documentclass[%
 reprint,
 amsmath,amssymb,
 aps,prl
]{revtex4-2}
\usepackage{graphicx}
\usepackage{dcolumn}
\usepackage{bm}

\usepackage{amsthm,amsfonts,amsmath,amssymb}

\usepackage{mathabx}

\newtheorem{theorem}{Theorem}

\newtheorem{lemma}{Lemma}
\newtheorem{remark}{Remark}

\newcommand{\Lopr}{\hat{\mathcal L}_{r}}
\newcommand{\Lopp}{\hat{\mathcal L}_{p}}

\makeatletter
\newcounter{savesection}
\newcounter{apdxsection}
\renewcommand\appendix{\par
  \setcounter{savesection}{\value{section}}%
  \setcounter{section}{\value{apdxsection}}%
  \setcounter{subsection}{0}%
  \gdef\thesection{\@Alph\c@section}}
\newcommand\unappendix{\par
  \setcounter{apdxsection}{\value{section}}%
  \setcounter{section}{\value{savesection}}%
  \setcounter{subsection}{0}%
  \gdef\thesection{\@arabic\c@section}}
\makeatother

\begin{document}

\preprint{APS/123-QED}

\title{Symplectic and Thermodynamically Consistent Molecular Dynamics\\ in the Frequency Domain}

\author{Kyunghoon Han}
\email{kyunghoon.h@gmail.com}
\author{Alexandre Tkatchenko}%
\author{Joshua T. Berryman}
\affiliation{%
 Department of Physics and Materials Science,
 University of Luxembourg, L-1511 Luxembourg City, Luxembourg}%

\date{\today}

\begin{abstract} 
We introduce Fourier integrator molecular dynamics (FIMD), a method for propagating selected vibrational motion of Hamiltonian systems stably and reversibly in time while analyzing and controlling dynamics in the frequency domain. This makes band selection and vibrational analysis features of the integrator rather than post-processing steps. We demonstrate the method with classical force fields, a machine-learned force field trained on quantum data, and semi-empirical quantum chemistry for CO$_2$ and the capped Ace--Phe--Tyr--NMe peptide. The method reproduces spectra within the chosen band, suppresses out-of-band response, reveals mode coupling, and demonstrates force-field dependence of spectral features, especially for the thermodynamically important low frequencies. 
FIMD offers an efficient and transparent way to probe the vibrational physics underlying spectroscopic and calorimetric observables.
\end{abstract}

\maketitle

Molecular vibrations shape stability, reactivity, and spectroscopic signatures. Yet in most workflows molecular dynamics (MD) is still integrated in the time domain \cite{rapaport2004art}, and frequency selectivity is introduced only \emph{after} the trajectory is generated, for example by analysing time-correlation functions and their Fourier transforms \cite{vitale2015anharmonic}. A natural question is whether one can instead perform \emph{symplectic} and \emph{time‑reversible} MD directly in a Fourier/modal representation, while keeping force evaluations tied to physically consistent real-space molecular geometries. Here we introduce Fourier--integrator molecular dynamics (FIMD), which answers this by formulating an exact harmonic drift in mode space and treating all remaining effects as explicit source terms; in the absence of stochastic sources, the resulting drift--kick update is symplectic.
Moreover, in the harmonic limit both the full and band-limited schemes preserve the canonical (Boltzmann) measure over a subspace and satisfy modewise equipartition, providing a principled basis for viewing band restriction as a controlled reduction of the active phase space rather than a post-processing filter.

The obvious use case for such an integrator is to generate infra red (IR) spectra, for example targeted calculations of the amide I-II bands often used empirically as reporters of protein structure. FIMD is also an efficient route to the finite temperature vibrational density of states (DOS) including anharmonic effects. The workflow of the proposed thermodynamic usecase would be to obtain multiple candidate configurational states for a biomolecule (for example using machine learning tools) and to run a band-limited FIMD simulation for each of them in order to evaluate from the DOS in the immediate vicinity around each of the candidate energy minimum structures. Low frequency vibrations contribute the dominant terms of the entropy, therefore band-limiting to low frequencies should capture the important thermodynamics per-configuration efficiently. 

(Quasi-)harmonic normal mode approaches provide compact representations around reference structures~\cite{wilson1980molecular,bahar2010normal}, and covariance-based frameworks quantify collective fluctuations and entropic contributions from trajectories~\cite{schlitter1993estimation,andricioaei2001calculation}. However, stable integration is dictated by the fastest vibrations, so uniform time stepping heavily samples high frequency motion while low-frequency, collective dynamics remain hard to sample over accessible trajectory lengths \cite{prasad2018best}. Isolating or manipulating a particular band when performed retrospectively is therefore cumbersome. In classical force fields the timestep limitation from the fastest X--H stretches is often alleviated by imposing holonomic bond constraints (e.g., SHAKE) or by hydrogen-mass repartitioning, whereas such interventions are problematic for quantum chemistry and machine-learned potentials---further motivating a frequency-selective propagation strategy~\cite{ryckaert1977shake,hopkins2015hmr}.

A substantial literature has developed towards extraction of rich frequency-resolved information from trajectories: power-spectrum estimators for anharmonic systems \cite{kleinhesselink1992evaluation,wang2020benzene}, signal-processing tools such as filter-diagonalisation to obtain vibrationally resolved spectra from short windows \cite{jungwirth1997vibrationally,da1998anharmonic}, and spectral energy density methods for mode-resolved lifetimes and transport \cite{thomas2010predicting}. Recent work has also moved toward explicitly frequency-selective collective coordinates and mode-coupling diagnostics: Sauer and Heyden introduced a fully anharmonic, frequency-dependent mode analysis (FRESEAN) based on Fourier-transformed velocity cross-correlations \cite{sauerHeyden2023fresean}, and information-theoretic measures have been used to quantify dependencies beyond linear correlation in mode-like representations \cite{numata2012entropy}. These approaches strengthen interpretation, but they do not change the propagator: frequency selectivity still enters only after the dynamics have been generated.

Complementary strategies attempt to \emph{influence} sampling in selected frequency ranges during time-domain propagation. Coloured-noise Langevin schemes tune the frequency-dependent thermostat response \cite{ceriotti2010coloured,rossi2014colored,brünig2022pair}, quantum thermal baths inject frequency-dependent fluctuations to mimic quantum populations \cite{dammak2009quantum}, and earlier ``Fourier acceleration'' ideas modify stochastic dynamics to reduce critical slowing down \cite{alexander2001fourier}. Separately, time-domain integrators can exploit local quadratic information to treat stiff motion analytically within a step \cite{millam1999hessian,bakken1999updating,lourderaj2007dd}. However, none of these provide a symplectic Fourier-space integrator that can strictly exclude chosen bands \emph{during} propagation. FIMD addresses this missing piece by making frequency selectivity part of the integrator: the propagation can be restricted to prescribed frequency windows \emph{during} the dynamics, with a strict modal projection that suppresses artificial out-of-band leakage, while a band-subspace Langevin thermostat provides controlled modewise thermalisation in clustered spectra.


\textit{Fourier--integrator molecular dynamics (FIMD).}\label{sec:theory}
FIMD advances molecular dynamics in a Fourier/modal representation, so that (i) the harmonic component of the Liouville flow becomes an exact phase advance, and (ii) hard band limitation can be imposed as a spectral truncation at the level of the generators.
Let $\mathbf r\in\mathbb R^{3N}$, $\mathbf p\in\mathbb R^{3N}$, diagonal mass matrix $M$, and Hamiltonian $H(\mathbf r,\mathbf p)=\tfrac12 \mathbf p^\mathsf T M^{-1}\mathbf p + V(\mathbf r)$.
The Liouville operator acting on phase-space observables $f$ is split into positional and momentum parts,
\begin{equation}
\hat{\mathcal L}f
=
\underbrace{(M^{-1}\mathbf p)\cdot\nabla_{\mathbf r} f}_{\hat{\mathcal L}_r f}
+
\underbrace{\mathbf F(\mathbf r)\cdot\nabla_{\mathbf p} f}_{\hat{\mathcal L}_p f},
\label{eq:liouville_split}
\end{equation}
where the force vector is obtained via $\mathbf F(\mathbf r)=-\nabla_{\mathbf r}V(\mathbf r)$.
A second-order Strang splitting yields velocity Verlet.
The key point for FIMD is that $\hat{\mathcal L}_r$ and $\hat{\mathcal L}_p$ generate translations; in the Fourier--Liouville basis used for the propagator, their exponentials act diagonally on Fourier coefficients.
Explicit Fourier expanded generators and short-time maps are given in the End Matter, and hard band limitation is implemented by truncating the retained Fourier indices.

If a Hessian is conveniently available, a local harmonic reference can be defined at geometry $\mathbf r_0$ by the Hessian $H_0=\nabla\nabla V(\mathbf r_0)$ and the mass-weighted displacement $\mathbf x=M^{1/2}(\mathbf r-\mathbf r_0)$.
Diagonalisation of the mass-weighted Hessian gives
\begin{equation}
M^{-1/2}H_0M^{-1/2}=W\,\Omega^2 W^\mathsf T,
\label{eq:hessian_diag}
\end{equation}
with $\Omega=\mathrm{diag}(\omega_1,\ldots,\omega_{n^\nu})$, $n^\nu \in \mathbb{N}$, where $W$ contains vibrational eigenvectors (with translations/rotations removed).
Modal coordinates and conjugate momenta are
\begin{equation}
\mathbf q=W^\mathsf T\mathbf x,
\qquad
\boldsymbol\pi=W^\mathsf T M^{-1/2}\mathbf p .
\label{eq:qpi_def}
\end{equation}
In this case we use the effective harmonic operator implied by $(W,\Omega)$, i.e.,
$M^{-1/2}H_0M^{-1/2}=W\Omega^2W^\mathsf T$.

When a Hessian is unavailable (or undesired) at finite temperature, the reference $(\mathbf r_0,W,\Omega)$ can be obtained from a short equilibrated trajectory (NVE or weakly thermostatted NVT) segment long enough to resolve the target band ($\delta\omega\sim 2\pi/T_{\mathrm{ref}}$).
After removing overall translation and rotation, we set $\mathbf r_0=\langle\mathbf r\rangle$ and form $\mathbf x(t)=M^{1/2}(\mathbf r(t)-\mathbf r_0)$; from the calibration segment we form a windowed covariance estimate $\widehat C\approx C=\langle \mathbf x\mathbf x^\mathsf T\rangle$ and estimate the columns of $W$ as an orthonormal basis spanning the dominant eigenspace of $\widehat C$. Frequency estimates directly from $\widehat C$ eigenvalues have a known systematic bias \cite{dryden} so here frequencies $\omega_\nu$ are assigned from Fourier spectra (or velocity autocorrelation spectra) of the projected velocities $v_\nu(t)=\mathbf w_\nu^\mathsf T M^{-1/2}\mathbf p(t)$.

\begin{figure*}[t]
  \centering
  \includegraphics[width=\linewidth]{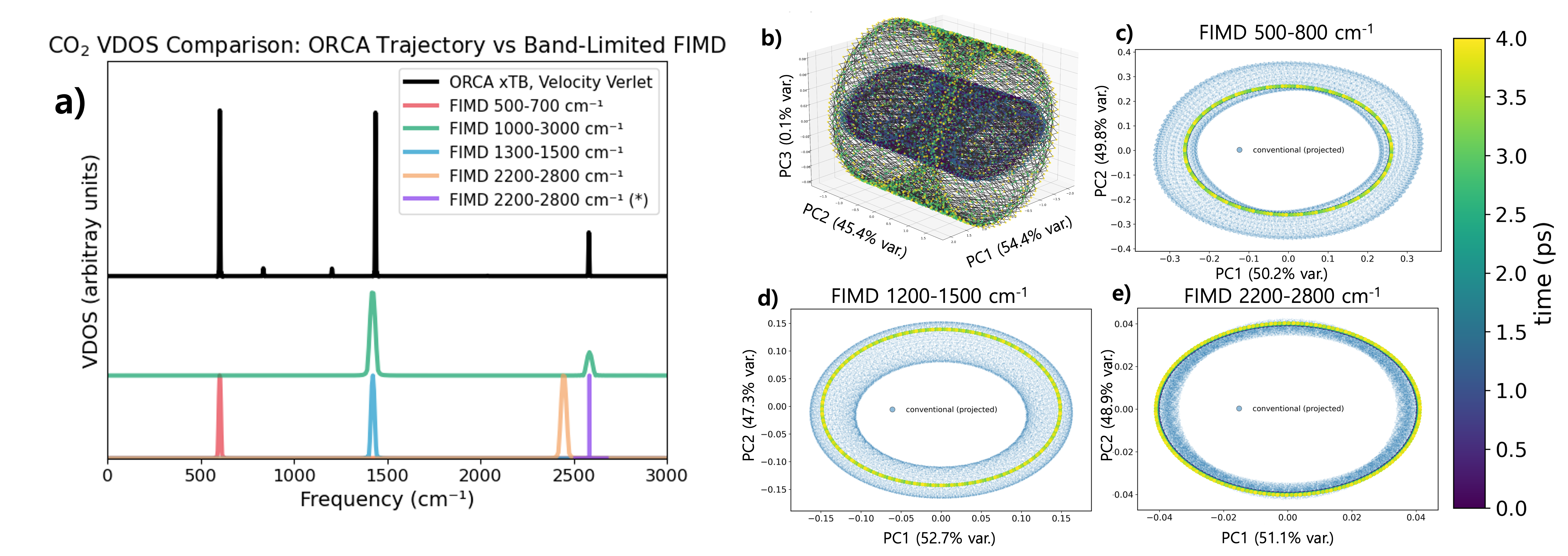}
\caption{\textbf{CO$_2$ as a minimal demonstrator of band-limited FIMD and mode-resolved phase-space structure.}
(a) Vibrational density of states (VDOS) from a conventional ORCA xTB velocity Verlet trajectory (black) compared with band-limited FIMD runs (coloured), where only modes with reference frequencies $\omega_\nu\!\in[\omega_{\min},\omega_{\max}]$ are propagated and forces are evaluated on the band-limited reconstruction $\mathbf r_B$; the chosen windows isolate bending/stretching features and suppress off-band response.
For 2200--2800~cm$^{-1}$, we compare a Hessian-based modal basis at the equilibrated geometry (2200--2800) with an alternative basis obtained from a trajectory via a spectral/Fourier construction (2200--2800$^{*}$), the latter yielding a peak closer to the conventional reference.
(b) Three-dimensional PCA of the \emph{three-mode} FIMD phase-space trajectory, showing a decomposition into three quasi-cylindrical components associated with the coexisting mode motions.
(c)--(e) For single-band FIMD runs (500--800, 1200--1500, and 2200--2800~cm$^{-1}$), the band phase space projected onto the first two principal components (PCs) yields near-elliptic annuli (coloured by time, 0--4~ps), while the conventional velocity Verlet trajectory projected onto the \emph{same band-specific} PC bases (blue) forms a thicker, more exploratory cloud that visualises additional coupling and out-of-band influence beyond the idealised band-limited rotation.}

  \label{fig:co2}
\end{figure*}

For a chosen active set of vibrational frequencies $B$, define the modal projector $P_B$ (diagonal in the mode index $\nu$), and set $\mathbf q_B=P_B\mathbf q$, $\boldsymbol\pi_B=P_B\boldsymbol\pi$.
In these variables the harmonic Hamiltonian
$\tfrac12\sum_\nu(\pi_\nu^2+\omega_\nu^2 q_\nu^2)$ produces an exact phase advance.
The time-$\Delta t$ harmonic map is written componentwise as
\begin{align}
q_\nu(t+\Delta t)
&=
q_\nu(t)\cos(\omega_\nu\Delta t)
+
\frac{\pi_\nu(t)}{\omega_\nu}\sin(\omega_\nu\Delta t),
\label{eq:harm_q}\\
\pi_\nu(t+\Delta t)
&=
\pi_\nu(t)\cos(\omega_\nu\Delta t)
-
\omega_\nu q_\nu(t)\sin(\omega_\nu\Delta t).
\label{eq:harm_pi}
\end{align}
This closed-form oscillatory update is the modal analogue of the drift flow in the split propagator.

The full force is decomposed into the harmonic reference about $\mathbf r_0$ plus a residual,
\begin{align}\label{eq:force_split}
\mathbf F(\mathbf r)
&=
-\;H_0(\mathbf r-\mathbf r_0)
+
\mathbf F_\Delta(\mathbf r).
\notag
\end{align}
In practice, $\mathbf F_\Delta(\mathbf r)$ is formed from a single Cartesian force evaluation $\mathbf F(\mathbf r)=-\nabla V(\mathbf r)$ by subtracting the harmonic baseline force $-H_0(\mathbf r-\mathbf r_0)$.

\textit{Band selection and band-limited propagation.}
A short equilibrated trajectory segment is used to estimate the vibrational spectrum and select a target window $[\omega_{\min},\omega_{\max}]$, whose upper cutoff sets a natural timestep scale for band-limited propagation, $\Delta t \lesssim \pi/\omega_{\max}$ (Nyquist).
The active band is then defined by the reference frequencies $\omega_\nu\in[\omega_{\min},\omega_{\max}]$,
where $\{\omega_\nu\}$ arise either from the Hessian diagonalisation in Eq.~\eqref{eq:hessian_diag} or from the
trajectory-based finite-temperature reference described above, defining
\begin{equation}
B=\{\nu:\ \omega_{\min}\le \omega_\nu\le \omega_{\max}\}.
\label{eq:band_def}
\end{equation}
Let $P_B$ denote the projector in mode-index space onto $\nu\in B$, and let $W_B$ be the restriction of $W$
to these modes.
Force evaluation is performed on the band-limited reconstruction
\begin{equation}
\mathbf r_B=\mathbf r_0+M^{-1/2}W_B\,\mathbf q_B.
\label{eq:reconstruct_rB}
\end{equation}
At each step, the Cartesian force is evaluated at $\mathbf r_B$ and mapped to modal space by
\begin{equation}
\mathbf s(\mathbf r_B)=W^\mathsf T M^{-1/2}\mathbf F_\Delta(\mathbf r_B),
\label{eq:source_band}
\end{equation}
after which only $P_B\mathbf s$ updates the active momenta.
A single deterministic band-limited step is the symmetric composition
\begin{align}
\boldsymbol\pi_B^{n+\frac12}
&=
\boldsymbol\pi_B^{n}
+
\frac{\Delta t}{2}\,P_B\,\mathbf s(\mathbf r_B^{\,n}),
\label{eq:fimd_kick1}\\
(\mathbf q_B^{n+1},\boldsymbol\pi_B^{n+\frac12,\mathrm{drift}})
&=
\Phi_{0,B}^{\Delta t}\!\left(\mathbf q_B^{n},\boldsymbol\pi_B^{n+\frac12}\right),
\label{eq:fimd_harm}\\
\boldsymbol\pi_B^{n+1}
&=
\boldsymbol\pi_B^{n+\frac12,\mathrm{drift}}
+
\frac{\Delta t}{2}\,P_B\,\mathbf s(\mathbf r_B^{\,n+1}),
\label{eq:fimd_kick2}
\end{align}
where $\Phi_{0,B}^{\Delta t}$ applies Eqs.~\eqref{eq:harm_q}--\eqref{eq:harm_pi} independently for each
$\nu\in B$. Modes outside $B$ are held fixed, implementing a hard spectral truncation in the modal/Fourier picture.

The reference $(\mathbf r_0,H_0,W,\Omega)$ is refreshed according to a geometry-based criterion,
which may update the active set $B$.
For fixed reference and without thermostatting, Eqs.~\eqref{eq:fimd_kick1}--\eqref{eq:fimd_kick2} define a symmetric second-order, time-reversible map.

For fixed reference and in the microcanonical setting, the band update is \emph{symplectic}.
It is a symmetric kick--drift--kick (Strang) splitting of the band Hamiltonian into the exactly solvable quadratic reference flow $\Phi_{0,B}^{\Delta t}$ and a potential kick generated by the conservative remainder force $\mathbf F_\Delta$ (restricted to $B$).
Since each substep is an exact Hamiltonian flow, their symmetric composition is a second-order symplectic map on the band phase space.

If finite-temperature sampling is employed, a thermostat may be applied only to the active band variables, leaving the complementary modes ($\nu\notin B$) unchanged.
A convenient choice is an Ornstein--Uhlenbeck update of each band momentum $\pi_\nu$ ($\nu\in B$) with friction, which draws the correct Maxwell--Boltzmann marginal for $\pi_B$ at temperature $T$.
When composed with the deterministic band propagation in the usual splitting fashion, this yields an NVT scheme that controls temperature without altering the inactive subspace.

Since the harmonic drift is evaluated in closed form, discretisation effects enter only through the symmetric splitting with the residual-force kick, giving a second-order global error in $\Delta t$; if $\Delta t$ is too large, the residual forcing is under-resolved and can manifest as small phase/amplitude errors and corresponding peak shifts/broadening in in-band spectra.
When frequencies are expressed as wavenumbers $\tilde\nu$ (cm$^{-1}$), the basic sampling bound is $\Delta t < [2c\,\tilde\nu_{\max}]^{-1}$.
Finally, the window $[\tilde\nu_{\min},\tilde\nu_{\max}]$ is a modal \emph{selection} and should be distinguished from physical linewidth: anharmonic broadening and inter-band mixing are governed by the residual force $\mathbf F_\Delta(\mathbf r)$, which vanishes for a purely quadratic potential; stronger residuals imply greater transfer to sidebands/out-of-band features and may therefore require wider windows (or more frequent reference updates) to reproduce coupled response.

Stationarity, mode-wise equipartition, preservation of the canonical (Boltzmann) measure, and additional NVE/NVT validation are provided in \cite{FIMD_SM}.

\textit{CO$_2$ as a prototypical example.}
\begin{figure*}[t]
\centering
\includegraphics[width=\textwidth]{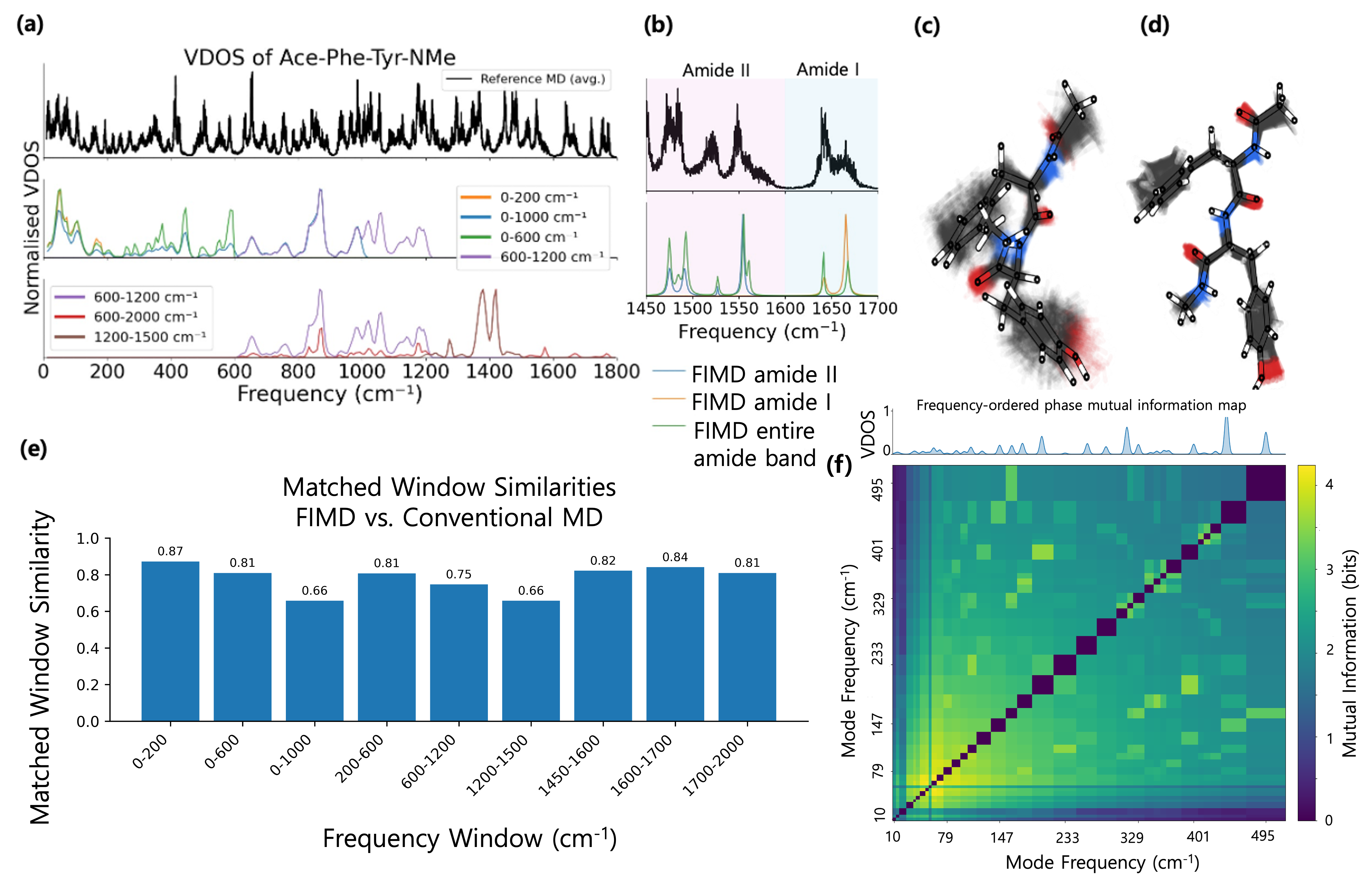}
\label{fig:fig2}
\caption{\textbf{SO3LR band-limited FIMD reproduction of peptide VDOS and phase-coupling diagnostics.}
(a) Velocity density of states (VDOS) of Ace--Phe--Tyr--NMe obtained from a conventional Cartesian velocity--Verlet trajectory (black) and from band-limited FIMD propagations (coloured) using the SO3LR potential with integration settings as in the previous figure; FIMD traces at various frequency windows are normalised for clarity.
(b) Zoom-in of the amide region (1450--1700~cm$^{-1}$) comparing the reference MD VDOS (black) with FIMD restricted to Amide~II (1450--1600~cm$^{-1}$), Amide~I (1600--1700~cm$^{-1}$), and the full amide band (1450--1700~cm$^{-1}$).
(c,d) Trajectory overlays for the reference MD run (c) and the 0--200~cm$^{-1}$ band-limited FIMD run (d), illustrating that even the low-band dynamics do not cross over to diffusive, exploratory character.
(e) Matched-window spectral similarity between the conventional SO3LR MD VDOS and the corresponding band-limited FIMD VDOS for each propagated frequency window.
Bar heights show the raw mass-weighted Jensen--Shannon similarity score \(S\) for the matched band/window pair,
with \(S=1\) corresponding to identical windowed spectra with matching spectral weight.
(f) Frequency-ordered phase mutual-information (MI) map for the SO3LR trajectory,
with the corresponding VDOS shown above on the same physical frequency axis.
The VDOS strip was obtained from the detrended and Lorentzian-decomposed spectrum using Tihi~\cite{HanTihi2024}.
Each off-diagonal cell reports phase MI in bits for a pair of Fourier-resolved mode phases; details of the phase-MI construction are given in \cite{FIMD_SM}
The pronounced low-frequency block and extended low-to-mid-frequency structure indicate organised soft-mode phase dependence.}
\end{figure*}

A short, 10 ps, conventional velocity Verlet segment defines the reference $(\mathbf r_0,W,\Omega)$ used for band selection and for the band-limited reconstruction $\mathbf r_B$, after removing overall translation/rotation. Fig.~\ref{fig:co2}(a) compares the VDOS from the conventional xTB trajectory (black) with band-limited FIMD runs (coloured): narrow windows isolate the expected bending and stretching features, while broader windows recover the corresponding spectral content within a controlled frequency range. For the 2200--2800~cm$^{-1}$ window we additionally compare two choices of the modal/Fourier basis: one obtained from a Hessian at the global minimum geometry, and one obtained directly from an existing trajectory via a Fourier/spectral construction (marked with \,*). The trajectory-derived basis yields a high-frequency peak that aligns more closely with the conventional velocity Verlet reference, consistent with capturing the finite-temperature effective frequency content relevant for the subsequent band-limited propagation.

Beyond spectral isolation, CO$_2$ provides a geometric diagnostic of band-resolved dynamics in phase space. Performing Principal Components Analysis (PCA) as a convenient dimensionality reduction tool on the \emph{three-mode} FIMD trajectory (Fig.~\ref{fig:co2}(b)) reveals that the motion is organised into three distinct cylindrical components, reflecting the coexistence of the three vibrational-mode motions within the full trajectory. When FIMD is restricted to a single vibrational band, the corresponding band phase space collapses to a near-elliptic annulus in the first two PCs (Fig.~\ref{fig:co2}(c--e)). This is the expected signature of predominantly harmonic rotation under the quadratic phase advance, with deviations introduced by the projected residual (anharmonic) kicks. Projecting the conventional velocity Verlet trajectory onto these \emph{band-specific} PC bases (blue points in Fig.~\ref{fig:co2}(c--e)) typically yields a thicker cloud around the annulus, providing direct evidence for additional coupling and out-of-band influence beyond the band-limited propagation.

Figure~\ref{fig:co2} shows that, in a trajectory-learned vibrational basis, band-limited FIMD recasts molecular motion into mode resolved action–angle mechanics, expressed by isolated spectral features and near-elliptic phase space annuli. Anharmonic coupling (and possible chaos) then appears as distortions of these orbits, providing a transferable diagnostic for larger molecules.

\textit{Band-limited FIMD validation for Ace--Phe--Tyr--NMe peptide with SO3LR.}
Figure~\ref{fig:fig2}(a) compares the velocity density of states (VDOS) of a conventional reference MD trajectory against multiple band-limited FIMD trajectories, all propagated with the same machine-learned SO3LR force field trained on DFT (PBE0+MBD) data~\cite{kabylda2025molecular}.
All runs use the same timestep and trajectory duration as in the previous figure.
Across all windows, the FIMD spectra retain the in-band structure while strongly suppressing the out-of-band response, providing a controlled way to isolate the vibrational content of a prescribed frequency interval; the matched-window similarity scores in Fig.~2(e) quantify this recovery.

Figure~\ref{fig:fig2}(b) zooms into the amide region (1450--1700~cm$^{-1}$).
The reference MD VDOS (black) is shown together with band-limited FIMD restricted to Amide~II (1450--1600~cm$^{-1}$), Amide~I (1600--1700~cm$^{-1}$), and the full amide band (1450--1700~cm$^{-1}$).
This decomposition resolves the amide substructure and serves as a stringent test of how spectral intensity is redistributed by coupling within the broader amide manifold.
While the Amide~I-only and Amide~II-only propagations reproduce the correct peak \emph{positions}, they can underrepresent certain features (missing peaks or distorted amplitudes) because intensity borrowing between the two sub-bands is suppressed by construction.
Including the entire amide window restores these cross-band couplings and correspondingly recovers the missing intensity, indicating that simulations targeted to this empirically important region of the spectrum are practical using FIMD providing that the whole amide band is covered.

Figure~\ref{fig:fig2} (c,d) visualise the resulting configurational exploration using trajectory overlays.
The conventional MD trajectory (c) spans a broad region of configuration space in this projection, whereas the 0--200~cm$^{-1}$ band-limited FIMD trajectory (d) occupies a markedly smaller subset: periodic re-basing and re-centring of the vibrational basis space should eventually permit a random walk over the configurational space of the molecule, however the FIMD algorithm is targeted to efficient investigation of vibrations, not to landscape exploration.

Panel~(e) quantifies the same spectral recovery more systematically across all propagated SO3LR frequency windows.
For each matched band/window pair, the conventional MD and band-limited FIMD VDOS were restricted to the same frequency interval, interpolated on a common grid, normalised within that interval, and compared using the
mass-weighted Jensen--Shannon similarity score \(S\) defined in the Supplementary Materials.
The matched-window scores are high for most windows.
Thus, beyond the specific Amide examples in panel~(b), FIMD generally reproduces the spectra across frequency bands.

The timestep-stability scan for the \(0\!-\!200~\mathrm{cm}^{-1}\)
SO3LR run is reported in \cite{FIMD_SM}; in the present implementation the dynamics remain bounded up to \(\Delta t=4.0~\mathrm{fs}\) and become unstable at \(\Delta t=4.5~\mathrm{fs}\).
A Nyquist argument for a strictly band-limited signal with cutoff $\tilde\nu_{\max}$ gives $\Delta t_{\mathrm{Nyq}}\approx(2c\,\tilde\nu_{\max})^{-1}$, i.e.\ $\Delta t_{\mathrm{Nyq}}\approx 83$~fs for $\tilde\nu_{\max}=200$~cm$^{-1}$.
This is only a heuristic motivation: sampling bounds do not control numerical stability or integration error.
For velocity Verlet (St\"ormer--Verlet), even the idealised harmonic-mode analysis imposes a much stricter constraint, and symplectic backward-error theory explains bounded long-time energy error only when the step is chosen conservatively~\cite{hairer2003geometric,toxvaerd2012energy}.
Moreover, coupled multi-mode dynamics can exhibit additional nonlinear resonance instabilities in splitting/MTS-type schemes, producing sharp timestep barriers not predictable from sampling considerations alone~\cite{ma2003verlet,minary2004long,leimkuhler2004simulating}.

Panel~(f) provides a complementary phase-coupling diagnostic for the SO3LR
trajectory. The heat map reports the mutual information
$I(\theta_i;\theta_j)$ between instantaneous phases of Fourier-resolved
vibrational modes, ordered by their frequency centres; the VDOS shown above
the map relates these phase dependencies to the spectral intensity. The
frequency centres used for this mode ordering were obtained from the
detrended, Lorentzian-decomposed VDOS using Tihi~\cite{HanTihi2024}; the full
phase-MI construction is described in \cite{FIMD_SM}. 
The pronounced low-frequency block and extended low-to-mid-frequency structure
indicate statistically organised phase dependence among soft modes, consistent
with collective anharmonic coupling rather than independent harmonic motion.
The corresponding AMBER ff14SB and ORCA xTB phase-MI maps are reported in \cite{FIMD_SM}; together, they show that the low-frequency
phase-coupling structure is strongly force-field dependent, with SO3LR giving
the most clearly organised soft-mode phase structure in the present analysis.

The full band--window Jensen--Shannon similarity matrices for SO3LR, ORCA xTB, and AMBER ff14SB are reported in \cite{FIMD_SM}.
For SO3LR, the matched band/window entries dominate the corresponding off-window entries, with an average matched score of
\(\langle S\rangle_{\rm match}=0.74\) compared with
\(\langle S\rangle_{\rm mismatch}=0.089\).  Panel~2(e) shows the matched SO3LR entries directly, making clear which spectral regions are well reproduced and which remain more challenging under the present band-limited propagation protocol.

\textit{Discussion and Outlook.}
We introduced Fourier--integrator molecular dynamics (FIMD), a trajectory-level formulation of classical MD in a vibrational/modal basis.
FIMD is constructed by a symmetric splitting between an exactly solvable quadratic reference flow and an explicit residual force term (including thermostatting when used), yielding a second-order, structure-preserving update.
Limitations arise when the local quadratic reference degrades under strong anharmonicity or rapid frequency drift, motivating more frequent updates or multi-reference variants.
Looking ahead, FIMD should enable targeted studies of complex vibrational manifolds---including strongly coupled bands and condensed-phase settings---by turning band selection into a practical knob for tuning into inter-mode energy flow.
Future work will develop adaptive band selection and reference-update schemes that preserve structure while extending applicability to more anharmonic regimes, look for memory-efficient bootstrapping methods, and investigate the potential of frequency controlled molecular dynamics as an accelerated sampling scheme.

\textit{Acknowledgments}
The authors thank Miguel Gallegos, Florian Brünig, Guido Di Paola, Giuseppe Mansi and Igor Poltavskyi of the University of Luxembourg and Gyunam Park of Chung-Ang University for helpful readings and discussions; and F.~Simone Ruggieri of the University of Wageningen for grounding in the empirical side of the state of the art. We acknowledge the grant C20/MS/14588607 ``QUantum Infra-Red Efficiently, QUIRE'' of the \emph{Fonds Nationale de la Recherche} (FNR) Luxembourg.

\bibliography{apssamp}

\pagebreak
\appendix

\providecommand{\Lopr}{\hat{\mathcal L}_r}
\providecommand{\Lopp}{\hat{\mathcal L}_p}

\section*{Fourier--expanded Liouville generators and the harmonic phase advance.}
This section records the Fourier--Liouville identities that underlie the band-limited propagators cited in the main text.

\textit{Fourier representation of a degree of freedom.}
For each Cartesian degree of freedom $\ell$, a truncated Fourier representation is introduced,
\begin{equation}
  r^{(\ell)}(t)
  = \sum_{k} r^{(\ell)}_{k}\,e^{i\omega^{(\ell)}_{k} t},
  \label{eq:fourier_r}
\end{equation}
where the index set for $k$ specifies the retained frequencies.
Band limitation corresponds to restricting this index set to frequencies within a chosen window.

\textit{Forcing in Fourier amplitudes.}
Let $L$ denote the Lagrangian associated with the dynamics. The functional derivative with respect to
$r^{(\ell)}(t)$ admits the Fourier-amplitude expansion
\begin{equation}
  \frac{\delta L}{\delta r^{(\ell)}(t)}
  = \sum_{k}
    e^{-i\omega^{(\ell)}_{k} t}\,
    \frac{\partial L}{\partial r^{(\ell)}_{k}},
  \label{eq:deltaL_fourier}
\end{equation}
and the corresponding momentum equation is
\begin{equation}
  \dot p^{(\ell)}(t)
  = \sum_{k}
    e^{-i\omega^{(\ell)}_{k} t}\,
    \frac{\partial L}{\partial r^{(\ell)}_{k}}.
  \label{eq:pdot_fourier}
\end{equation}

\textit{Fourier-expanded split generators.}
Consider the one-degree-of-freedom split generators (associated with the $\hat{\mathcal L}_r$ and
$\hat{\mathcal L}_p$ in the main text) acting on an observable $f$.
Using \eqref{eq:fourier_r} and the chain rule through the Fourier amplitudes gives the positional generator
\begin{align}
  \Lopr^{(\ell)} f
  &=
  \sum_{k,j}
  e^{i(\omega^{(\ell)}_{k}-\omega^{(\ell)}_{j})t}\,
  \Bigl(\dot r^{(\ell)}_{k}
        + i\omega^{(\ell)}_{k} r^{(\ell)}_{k}\Bigr)\,
  \frac{\partial f}{\partial r^{(\ell)}_{j}},
  \label{eq:Lr_fourier}
\end{align}
and inserting \eqref{eq:pdot_fourier} yields the momentum generator
\begin{align}
  \Lopp^{(\ell)} f
  &=
  \sum_{k,j}
  e^{-i\omega^{(\ell)}_{k} t}\,
  \frac{\partial r^{(\ell)}_{j}}{\partial p^{(\ell)}_{k}}\,
  \frac{\partial L}{\partial r^{(\ell)}_{k}}\,
  \frac{\partial f}{\partial r^{(\ell)}_{j}}.
  \label{eq:Lp_fourier}
\end{align}
Equations \eqref{eq:Lr_fourier}--\eqref{eq:Lp_fourier} show explicitly that spectral truncation (restriction
of the Fourier indices $k$) induces a hard band limitation at the operator level.

\textit{Short-time propagators.}
Introduce the compact differential operator
\begin{equation}
\mathcal D^{(\ell,\Delta t)}_{j,m}
:=
\frac{(\Delta t)^m}{m!}\,
\frac{\partial^m}{\partial (r^{(\ell)}_{j})^m}.
\label{eq:Djm_def}
\end{equation}
A Taylor expansion of the exponential maps generated by \eqref{eq:Lr_fourier}--\eqref{eq:Lp_fourier} then gives
\begin{align}
  e^{\Delta t\,\Lopr^{(\ell)}} f
  &=
  \sum_{m=0}^{\infty}\sum_{k,j}
  e^{\,im(\omega^{(\ell)}_{k}-\omega^{(\ell)}_{j})\Delta t}\,
  \mathcal{K}_r^{mk,\ell}\,
  \mathcal D^{(\ell,\Delta t)}_{j,m} f
  \label{eq:expLr_fourier}\\[2pt]
  e^{(\Delta t/2)\,\Lopp^{(\ell)}} f
  &=
  \sum_{m=0}^{\infty}\sum_{k,j}
  e^{-\,\tfrac{i}{2}m\omega^{(\ell)}_{k}\Delta t}\,
  \mathcal{K}^{mk,\ell}_p
  \mathcal D^{(\ell,\Delta t)}_{j,m} f.
  \label{eq:expLp_fourier}
\end{align}
with
\begin{align}
\mathcal{K}_r^{mk,\ell}&=\Bigl(\dot r^{(\ell)}_{k}+ i\omega^{(\ell)}_{k} r^{(\ell)}_{k}\Bigr)^{m}, \\
    \mathcal{K}_p^{mk,\ell}&=\left( \frac{1}{2}\frac{\partial r^{(\ell)}_{j}}{\partial p^{(\ell)}_{k}}\,\frac{\partial L}{\partial r^{(\ell)}_{k}}\right)^{m}. \nonumber
\end{align}
In band-limited dynamics, the sums over $k$ are restricted to the retained Fourier indices; this restriction
is the algebraic expression of hard spectral truncation.

\textit{Harmonic phase advance and recovery of cosine--sine updates.}
In the harmonic limit, each normal-mode coordinate contains only a two-sided single-frequency pair.
For one mode $\nu$ with frequency $\omega_\nu$, write
\begin{equation}
q_\nu(t)=a_\nu e^{i\omega_\nu t}+a_\nu^{*}e^{-i\omega_\nu t}.
\label{eq:q_two_sided}
\end{equation}
A time shift by $\Delta t$ multiplies the Fourier components by $e^{\pm i\omega_\nu\Delta t}$, implying
\begin{equation}
q_\nu(t+\Delta t)
=
q_\nu(t)\cos(\omega_\nu\Delta t)
+\frac{\dot q_\nu(t)}{\omega_\nu}\sin(\omega_\nu\Delta t),
\label{eq:q_shift_cos_sin}
\end{equation}
and, by differentiation,
\begin{equation}
\dot q_\nu(t+\Delta t)
=
\dot q_\nu(t)\cos(\omega_\nu\Delta t)
-\omega_\nu q_\nu(t)\sin(\omega_\nu\Delta t).
\label{eq:qdot_shift_cos_sin}
\end{equation}
Upon identifying $\pi_\nu=\dot q_\nu$ in mass-weighted modal variables, these relations recover the exact
harmonic phase-advance map used for the drift step in the main text: the harmonic generator is diagonal in a
Fourier/modal basis, and its exponential is a pure phase multiplication.

\textit{Band-subspace thermostat.}
If finite-temperature sampling is employed, a thermostat may be applied only to the active band variables.
For example, an Ornstein--Uhlenbeck update for each band momentum $\pi_\nu$ ($\nu\in B$) with friction $\gamma$ is
\begin{equation}
\pi_\nu \leftarrow e^{-\gamma\Delta t}\pi_\nu
+\sqrt{(1-e^{-2\gamma\Delta t})\,k_B T}\;\xi_\nu,
\qquad
\xi_\nu\sim\mathcal N(0,1),
\label{eq:OU_band}
\end{equation}
while $\nu\notin B$ remains unchanged.


\section*{Symplecticity of the band-limited Fourier integrator}

\begin{theorem}[Symplecticity of fixed-reference band-limited FIMD]
\label{thm:supp_symplecticity}
Assume the reference $(\mathbf r_0,H_0,W,\Omega)$ and active set $B$ are held fixed over a step.
Let $P_B$ be the projector onto the retained mode indices, and define the band reconstruction
\begin{equation}
  \mathbf r_B(\mathbf q_B)=\mathbf r_0 + M^{-1/2}W_B\,\mathbf q_B,
  \label{eq:supp_reconstruct_rB}
\end{equation}
with $W_B$ the restriction of $W$ to columns in $B$.
Define the residual potential and residual force by
\begin{align}\label{eq:supp_DeltaV_def}
  \Delta V(\mathbf r)=V(\mathbf r)-V(\mathbf r_0)-\tfrac12(\mathbf r-\mathbf r_0)^{\mathsf T}H_0(\mathbf r-\mathbf r_0),\\
  \mathbf F_\Delta(\mathbf r)=-\nabla\Delta V(\mathbf r), \nonumber
\end{align}
and define the (projected) modal source
\begin{equation}
  \mathbf s_B(\mathbf r):=W_B^{\mathsf T}M^{-1/2}\,\mathbf F_\Delta(\mathbf r).
  \label{eq:supp_source_def}
\end{equation}
Let $\Phi_{0,B}^{\Delta t}$ denote the exact time-$\Delta t$ harmonic flow generated by
$H_{0,B}=\tfrac12\sum_{\nu\in B}(\pi_\nu^2+\omega_\nu^2 q_\nu^2)$.
Consider the kick--drift--kick map
\begin{align}\label{eq:supp_kdk_map}
\Psi_B^{\Delta t}
&:=K_{\Delta t/2}\circ\Phi_{0,B}^{\Delta t}\circ K_{\Delta t/2},\\
K_{\tau}&:(\mathbf q_B,\boldsymbol\pi_B)\mapsto\bigl(\mathbf q_B,\boldsymbol\pi_B+\tau\,\mathbf s_B(\mathbf r_B(\mathbf q_B))\bigr).\nonumber
\end{align}
Then $\Psi_B^{\Delta t}$ is symplectic on the band phase space $(\mathbf q_B,\boldsymbol\pi_B)$ with canonical form
$\omega_B=\sum_{\nu\in B} dq_\nu\wedge d\pi_\nu$.
\end{theorem}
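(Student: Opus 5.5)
The plan is to show that each of the three factors in $\Psi_B^{\Delta t}$ is symplectic with respect to $\omega_B=\sum_{\nu\in B}dq_\nu\wedge d\pi_\nu$. Symplecticity is then inherited by the composition, since pullbacks compose: $(\Psi_B^{\Delta t})^*\omega_B=K_{\Delta t/2}^*(\Phi_{0,B}^{\Delta t})^*K_{\Delta t/2}^*\omega_B$.

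\textbf{Step 1 (drift).} $\Phi_{0,B}^{\Delta t}$ acts independently on each pair $(q_\nu,\pi_\nu)$, $\nu\in B$, by the linear map given in Eqs.~\eqref{eq:harm_q}--\eqref{eq:harm_pi}. Its $2\times2$ matrix is
\[
\begin{pmatrix}\cos\omega_\nu\Delta t & \omega_\nu^{-1}\sin\omega_\nu\Delta t\\ -\omega_\nu\sin\omega_\nu\Delta t & \cos\omega_\nu\Delta t\end{pmatrix},
\]
whose determinant is $\cos^2+\sin^2=1$. In two dimensions, determinant one is equivalent to preserving $dq_\nu\wedge d\pi_\nu$. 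Summing over $\nu\in B$ gives $(\Phi_{0,B}^{\Delta t})^*\omega_B=\omega_B$. Equivalently, one can note that it is the exact flow of the Hamiltonian $H_{0,B}$. For $\omega_\nu\to0$ the map is read as its limit $q\mapsto q+\Delta t\,\pi$, which is still symplectic. I assume $\omega_\nu>0$ on $B$, as translations and rotations are removed.

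\textbf{Step 2 (kick).} This is the main point: the projected source must be a gradient in the band coordinates $\mathbf q_B$. Define the reduced potential $U_B(\mathbf q_B):=\Delta V(\mathbf r_B(\mathbf q_B))$. By the chain rule and Eq.~\eqref{eq:supp_reconstruct_rB},
\[
\nabla_{\mathbf q_B}U_B=(M^{-1/2}W_B)^{\mathsf T}\nabla\Delta V(\mathbf r_B)=-W_B^{\mathsf T}M^{-1/2}\mathbf F_\Delta(\mathbf r_B)=-\mathbf s_B(\mathbf r_B(\mathbf q_B)),
\]
using that $M^{-1/2}$ is symmetric. Hence $K_\tau$ is exactly the time-$\tau$ flow of the Hamiltonian $U_B(\mathbf q_B)$: $\mathbf q_B$ is constant and $\dot{\boldsymbol\pi}_B=-\nabla U_B$ is constant along the flow. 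This identity is the step I expect to carry the weight. It relies on the reconstruction being linear in $\mathbf q_B$ with the same matrix $M^{-1/2}W_B$ that appears transposed in $\mathbf s_B$, and on the reference being fixed, so that $W_B$ and $\mathbf r_0$ do not depend on the state.

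Symplecticity of $K_\tau$ then follows directly. We have $d\pi_\nu\mapsto d\pi_\nu-\tau\sum_{\mu\in B}\partial_{q_\nu}\partial_{q_\mu}U_B\,dq_\mu$, so
\[
K_\tau^*\omega_B=\omega_B-\tau\sum_{\nu,\mu}\partial^2_{\nu\mu}U_B\,dq_\nu\wedge dq_\mu=\omega_B,
\]
because the Hessian is symmetric and $dq_\nu\wedge dq_\mu$ is antisymmetric. This step needs $V$ to be $C^2$, which is implicit in the definition of $H_0$.

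\textbf{Step 3 (composition).} Combining Steps 1 and 2 gives $(\Psi_B^{\Delta t})^*\omega_B=\omega_B$. I would also remark why the inactive modes do not interfere. They are frozen at $q_\nu=0$, which is implicit in $\mathbf r_B$, and they enter neither the band Hamiltonian nor the reduced potential. So the statement concerns the band phase space alone, and $\Psi_B^{\Delta t}$ is a Strang splitting of the band Hamiltonian $H_{0,B}+U_B$.
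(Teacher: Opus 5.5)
Your proposal is correct and follows the paper's proof essentially step for step: the drift is the exact flow of $H_{0,B}$, the chain-rule identity $\nabla_{\mathbf q_B}\Delta V(\mathbf r_B(\mathbf q_B))=-\mathbf s_B(\mathbf r_B(\mathbf q_B))$ identifies $K_\tau$ as the exact flow of the reduced residual potential, and symplecticity then passes to the Strang composition. Your explicit determinant check and Hessian-symmetry pullback computation only verify directly what the paper takes from the general fact that exact Hamiltonian flows are symplectic.
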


\begin{proof}
Define the band Hamiltonian
\begin{align}
H_B(\mathbf q_B,\boldsymbol\pi_B)&=H_{0,B}(\mathbf q_B,\boldsymbol\pi_B)+H_{\Delta,B}(\mathbf q_B),
\\
H_{\Delta,B}(\mathbf q_B)&:=\Delta V\bigl(\mathbf r_B(\mathbf q_B)\bigr).
\end{align}

(i) $\Phi_{0,B}^{\Delta t}$ is the exact Hamiltonian flow of $H_{0,B}$, hence symplectic.

(ii) For $H_{\Delta,B}$, Hamilton's equations give $\dot{\mathbf q}_B=\nabla_{\boldsymbol\pi_B}H_{\Delta,B}=0$ and
$\dot{\boldsymbol\pi}_B=-\nabla_{\mathbf q_B}H_{\Delta,B}$.
Using \eqref{eq:supp_reconstruct_rB} and the chain rule,
\begin{align}
\nabla_{\mathbf q_B}H_{\Delta,B}(\mathbf q_B)
&=(M^{-1/2}W_B)^{\mathsf T}\nabla_{\mathbf r}\Delta V(\mathbf r_B)\\
&=-(W_B^{\mathsf T}M^{-1/2})\,\mathbf F_\Delta(\mathbf r_B)\\
&=-\mathbf s_B(\mathbf r_B),
\end{align}
so $\dot{\boldsymbol\pi}_B=\mathbf s_B(\mathbf r_B(\mathbf q_B))$.
Therefore the exact time-$\tau$ flow of $H_{\Delta,B}$ is precisely the shear $K_{\tau}$ in \eqref{eq:supp_kdk_map}, and hence is symplectic.

(iii) The full step $\Psi_B^{\Delta t}$ is the Strang composition of symplectic maps in \eqref{eq:supp_kdk_map}, so it is symplectic.
\end{proof}

\pagebreak
%

\clearpage
\onecolumngrid

\setcounter{section}{0}
\setcounter{figure}{0}
\setcounter{table}{0}
\setcounter{equation}{0}
\renewcommand{\thesection}{S\arabic{section}}
\renewcommand{\thefigure}{S\arabic{figure}}
\renewcommand{\thetable}{S\arabic{table}}
\renewcommand{\theequation}{S\arabic{equation}}

\begin{center}
  {\large\bfseries Supplementary Material for\\[2pt]
   ``Symplectic and Thermodynamically Consistent Molecular Dynamics
     in the Frequency Domain''}\\[6pt]
  {\normalsize Kyunghoon Han,\enspace Alexandre Tkatchenko,\enspace
   Joshua T.\ Berryman}\\[2pt]
  {\small\itshape Department of Physics and Materials Science,
   University of Luxembourg, Grand-Duchy of Luxembourg}
\end{center}

\vspace{1em}

\section*{Vibrational density of states comparison (AMBER FF14SB, ORCA xTB, SO3LR)}

Figure~\ref{fig:vdos_three_panel} presents the full stacked VDOS comparison for Ace--Phe--Tyr--NMe across AMBER ff14SB, ORCA xTB, and SO3LR.
In the main text, the spectral discussion is focused by showing a representative stacked panel (SO3LR) together with trajectory-level diagnostics (PCA portraits and Amide~I/II proxies), while the complete multi-model spectral context is provided here.
Within each force model, the band-limited FIMD traces recover the dominant peak positions and intra-band substructure \emph{inside} the propagated window and strongly suppress response outside it, consistent with hard modal projection during propagation.
Across models, the low-frequency region ($<600~\mathrm{cm^{-1}}$) is particularly discriminating: the stacked spectra show a progression from comparatively diffuse low-band content in AMBER, through more distinct low-band features in xTB, to sharper and more persistent low-frequency substructure in SO3LR, motivating the low-band trajectory portraits discussed in the main text.

Beyond confirming in-band recovery, the stacked spectra make inter-band contributions to selected peaks visually explicit, with the same qualitative behaviour observed across all three force models.
When a propagation window is narrowed, peaks that are prominent in the full Cartesian VDOS but weaken or disappear in the corresponding band-limited trace indicate features that are not sustained by dynamics confined to that band alone, but instead draw intensity from coupling to modes outside the retained subspace (e.g.\ combination-band character or anharmonic mixing). 
This effect is clearest in reciprocal-window comparisons: a feature that is reduced in a high-only window but re-emerges when low-frequency partners are included points to low--high coupling, whereas peaks that remain unchanged under such swaps are consistent with predominantly intra-band fundamentals. 
In the mid-frequency region, strict isolation of Amide~II (1450--1600~cm$^{-1}$) and Amide~I (1600--1700~cm$^{-1}$) likewise suppresses cross-band features that appear in broader windows, separating genuinely band-local structure from peaks whose visibility depends on neighbouring-band participation.

\begin{figure}[b]
\centering
\includegraphics[width=\textwidth]{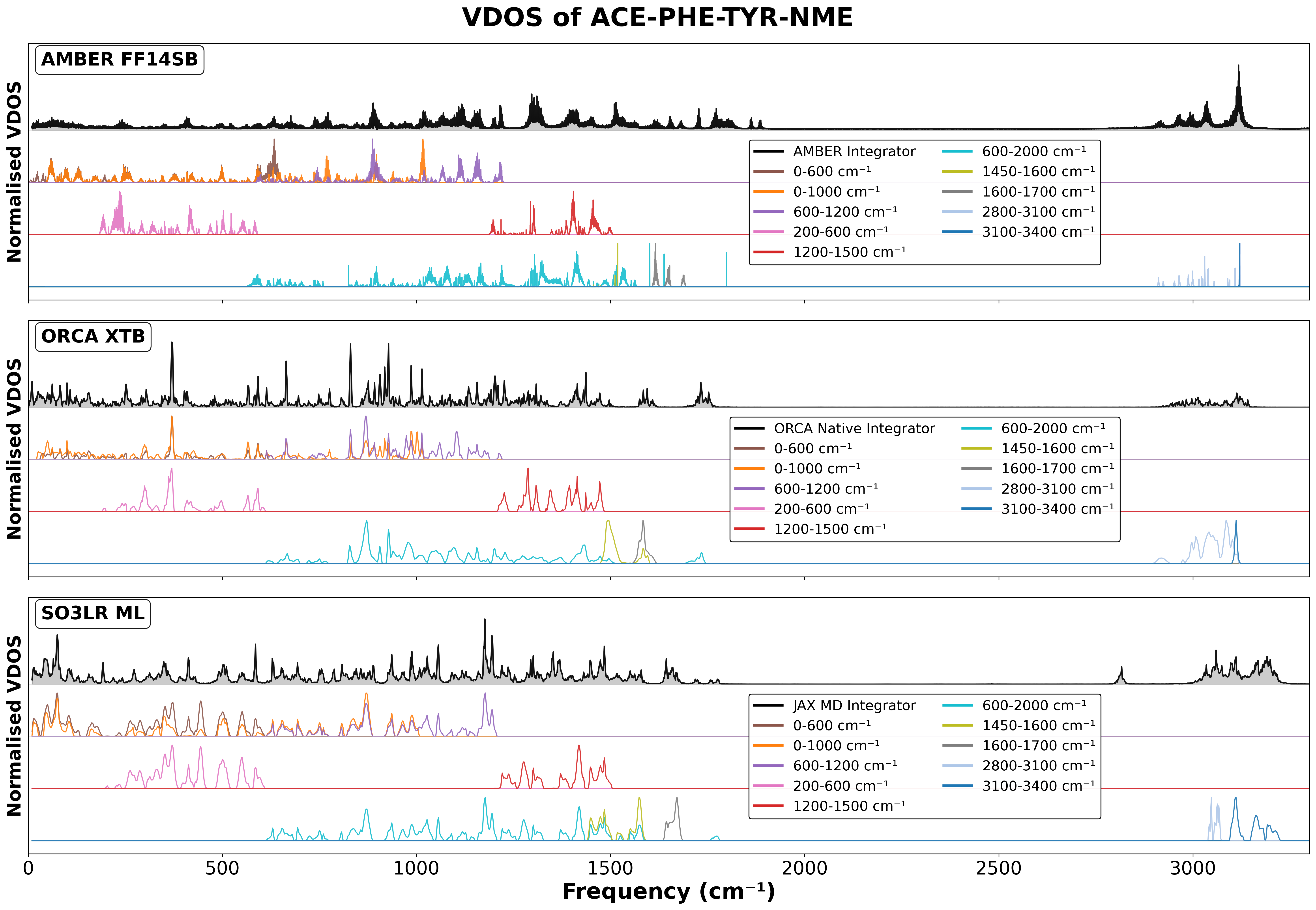}
\caption{\textbf{VDOS of Ace--Phe--Tyr--NMe across force models and band-limited FIMD diagnostics.}
Top to bottom: AMBER ff14SB, ORCA xTB, and SO3LR (JAX MD). Black curves show the reference VDOS from the native Cartesian trajectory (velocity--Verlet; identical $\Delta t$ and total time for all panels). Coloured curves show independent band-limited FIMD propagations restricted to the frequency windows listed in the legend (normalised and vertically offset).}
\label{fig:vdos_three_panel}
\end{figure}
\providecommand{\Lopr}{\hat{\mathcal L}_r}
\providecommand{\Lopp}{\hat{\mathcal L}_p}

\section*{Fourier--expanded Liouville generators and the harmonic phase advance.}
This section records the Fourier--Liouville identities that underlie the band-limited propagators cited in the main text.

\textit{Fourier representation of a degree of freedom.}
For each Cartesian degree of freedom $\ell$, a truncated Fourier representation is introduced,
\begin{equation}
  r^{(\ell)}(t)
  = \sum_{k} r^{(\ell)}_{k}\,e^{i\omega^{(\ell)}_{k} t},
  \label{eq:fourier_r}
\end{equation}
where the index set for $k$ specifies the retained frequencies.
Band limitation corresponds to restricting this index set to frequencies within a chosen window.

\textit{Forcing in Fourier amplitudes.}
Let $L$ denote the Lagrangian associated with the dynamics. The functional derivative with respect to
$r^{(\ell)}(t)$ admits the Fourier-amplitude expansion
\begin{equation}
  \frac{\delta L}{\delta r^{(\ell)}(t)}
  = \sum_{k}
    e^{-i\omega^{(\ell)}_{k} t}\,
    \frac{\partial L}{\partial r^{(\ell)}_{k}},
  \label{eq:deltaL_fourier}
\end{equation}
and the corresponding momentum equation is
\begin{equation}
  \dot p^{(\ell)}(t)
  = \sum_{k}
    e^{-i\omega^{(\ell)}_{k} t}\,
    \frac{\partial L}{\partial r^{(\ell)}_{k}}.
  \label{eq:pdot_fourier}
\end{equation}

\textit{Fourier-expanded split generators.}
Consider the one-degree-of-freedom split generators (associated with the $\hat{\mathcal L}_r$ and
$\hat{\mathcal L}_p$ in the main text) acting on an observable $f$.
Using \eqref{eq:fourier_r} and the chain rule through the Fourier amplitudes gives the positional generator
\begin{align}
  \Lopr^{(\ell)} f
  &=
  \sum_{k,j}
  e^{i(\omega^{(\ell)}_{k}-\omega^{(\ell)}_{j})t}\,
  \Bigl(\dot r^{(\ell)}_{k}
        + i\omega^{(\ell)}_{k} r^{(\ell)}_{k}\Bigr)\,
  \frac{\partial f}{\partial r^{(\ell)}_{j}},
  \label{eq:Lr_fourier}
\end{align}
and inserting \eqref{eq:pdot_fourier} yields the momentum generator
\begin{align}
  \Lopp^{(\ell)} f
  &=
  \sum_{k,j}
  e^{-i\omega^{(\ell)}_{k} t}\,
  \frac{\partial r^{(\ell)}_{j}}{\partial p^{(\ell)}_{k}}\,
  \frac{\partial L}{\partial r^{(\ell)}_{k}}\,
  \frac{\partial f}{\partial r^{(\ell)}_{j}}.
  \label{eq:Lp_fourier}
\end{align}
Equations \eqref{eq:Lr_fourier}--\eqref{eq:Lp_fourier} show explicitly that spectral truncation (restriction
of the Fourier indices $k$) induces a hard band limitation at the operator level.

\textit{Short-time propagators.}
Introduce the compact differential operator
\begin{equation}
\mathcal D^{(\ell,\Delta t)}_{j,m}
:=
\frac{(\Delta t)^m}{m!}\,
\frac{\partial^m}{\partial (r^{(\ell)}_{j})^m}.
\label{eq:Djm_def}
\end{equation}
A Taylor expansion of the exponential maps generated by \eqref{eq:Lr_fourier}--\eqref{eq:Lp_fourier} then gives
\begin{align}
  e^{\Delta t\,\Lopr^{(\ell)}} f
  &=
  \sum_{m=0}^{\infty}\sum_{k,j}
  e^{\,im(\omega^{(\ell)}_{k}-\omega^{(\ell)}_{j})\Delta t}\,
  \mathcal{K}_r^{mk,\ell}\,
  \mathcal D^{(\ell,\Delta t)}_{j,m} f
  \label{eq:expLr_fourier}\\[2pt]
  e^{(\Delta t/2)\,\Lopp^{(\ell)}} f
  &=
  \sum_{m=0}^{\infty}\sum_{k,j}
  e^{-\,\tfrac{i}{2}m\omega^{(\ell)}_{k}\Delta t}\,
  \mathcal{K}^{mk,\ell}_p
  \mathcal D^{(\ell,\Delta t)}_{j,m} f.
  \label{eq:expLp_fourier}
\end{align}
with
\begin{align}
\mathcal{K}_r^{mk,\ell}&=\Bigl(\dot r^{(\ell)}_{k}+ i\omega^{(\ell)}_{k} r^{(\ell)}_{k}\Bigr)^{m}, \\
    \mathcal{K}_p^{mk,\ell}&=\left( \frac{1}{2}\frac{\partial r^{(\ell)}_{j}}{\partial p^{(\ell)}_{k}}\,\frac{\partial L}{\partial r^{(\ell)}_{k}}\right)^{m}. \nonumber
\end{align}
In band-limited dynamics, the sums over $k$ are restricted to the retained Fourier indices; this restriction
is the algebraic expression of hard spectral truncation.

\textit{Harmonic phase advance and recovery of cosine--sine updates.}
In the harmonic limit, each normal-mode coordinate contains only a two-sided single-frequency pair.
For one mode $\nu$ with frequency $\omega_\nu$, write
\begin{equation}
q_\nu(t)=a_\nu e^{i\omega_\nu t}+a_\nu^{*}e^{-i\omega_\nu t}.
\label{eq:q_two_sided}
\end{equation}
A time shift by $\Delta t$ multiplies the Fourier components by $e^{\pm i\omega_\nu\Delta t}$, implying
\begin{equation}
q_\nu(t+\Delta t)
=
q_\nu(t)\cos(\omega_\nu\Delta t)
+\frac{\dot q_\nu(t)}{\omega_\nu}\sin(\omega_\nu\Delta t),
\label{eq:q_shift_cos_sin}
\end{equation}
and, by differentiation,
\begin{equation}
\dot q_\nu(t+\Delta t)
=
\dot q_\nu(t)\cos(\omega_\nu\Delta t)
-\omega_\nu q_\nu(t)\sin(\omega_\nu\Delta t).
\label{eq:qdot_shift_cos_sin}
\end{equation}
Upon identifying $\pi_\nu=\dot q_\nu$ in mass-weighted modal variables, these relations recover the exact
harmonic phase-advance map used for the drift step in the main text: the harmonic generator is diagonal in a
Fourier/modal basis, and its exponential is a pure phase multiplication.

\textit{Band-subspace thermostat.}
If finite-temperature sampling is employed, a thermostat may be applied only to the active band variables.
For example, an Ornstein--Uhlenbeck update for each band momentum $\pi_\nu$ ($\nu\in B$) with friction $\gamma$ is
\begin{equation}
\pi_\nu \leftarrow e^{-\gamma\Delta t}\pi_\nu
+\sqrt{(1-e^{-2\gamma\Delta t})\,k_B T}\;\xi_\nu,
\qquad
\xi_\nu\sim\mathcal N(0,1),
\label{eq:OU_band}
\end{equation}
while $\nu\notin B$ remains unchanged.


\section*{Symplecticity of the band-limited Fourier integrator}

\begin{theorem}[Symplecticity of fixed-reference band-limited FIMD]
\label{thm:supp_symplecticity}
Assume the reference $(\mathbf r_0,H_0,W,\Omega)$ and active set $B$ are held fixed over a step.
Let $P_B$ be the projector onto the retained mode indices, and define the band reconstruction
\begin{equation}
  \mathbf r_B(\mathbf q_B)=\mathbf r_0 + M^{-1/2}W_B\,\mathbf q_B,
  \label{eq:supp_reconstruct_rB}
\end{equation}
with $W_B$ the restriction of $W$ to columns in $B$.
Define the residual potential and residual force by
\begin{equation}
  \Delta V(\mathbf r)=V(\mathbf r)-V(\mathbf r_0)-\tfrac12(\mathbf r-\mathbf r_0)^{\mathsf T}H_0(\mathbf r-\mathbf r_0),
  \qquad
  \mathbf F_\Delta(\mathbf r)=-\nabla\Delta V(\mathbf r),
  \label{eq:supp_DeltaV_def}
\end{equation}
and define the (projected) modal source
\begin{equation}
  \mathbf s_B(\mathbf r):=W_B^{\mathsf T}M^{-1/2}\,\mathbf F_\Delta(\mathbf r).
  \label{eq:supp_source_def}
\end{equation}
Let $\Phi_{0,B}^{\Delta t}$ denote the exact time-$\Delta t$ harmonic flow generated by
$H_{0,B}=\tfrac12\sum_{\nu\in B}(\pi_\nu^2+\omega_\nu^2 q_\nu^2)$.
Consider the kick--drift--kick map
\begin{equation}
\Psi_B^{\Delta t}
:=K_{\Delta t/2}\circ\Phi_{0,B}^{\Delta t}\circ K_{\Delta t/2},
\qquad
K_{\tau}:(\mathbf q_B,\boldsymbol\pi_B)\mapsto\bigl(\mathbf q_B,\boldsymbol\pi_B+\tau\,\mathbf s_B(\mathbf r_B(\mathbf q_B))\bigr).
\label{eq:supp_kdk_map}
\end{equation}
Then $\Psi_B^{\Delta t}$ is symplectic on the band phase space $(\mathbf q_B,\boldsymbol\pi_B)$ with canonical form
$\omega_B=\sum_{\nu\in B} dq_\nu\wedge d\pi_\nu$.
\end{theorem}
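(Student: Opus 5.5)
The plan is to write $\Psi_B^{\Delta t}$ as a composition of exact Hamiltonian flows on the band phase space and then use that symplectic maps are closed under composition. The natural splitting is
\begin{equation*}
H_B=H_{0,B}+H_{\Delta,B},\qquad H_{\Delta,B}(\mathbf q_B):=\Delta V\bigl(\mathbf r_B(\mathbf q_B)\bigr).
\end{equation*}
Because the reference $(\mathbf r_0,H_0,W,\Omega)$ and the set $B$ are fixed over the step, both pieces are autonomous smooth functions on $(\mathbf q_B,\boldsymbol\pi_B)$-space. It then suffices to show two things: $\Phi_{0,B}^{\Delta t}$ is the time-$\Delta t$ flow of $H_{0,B}$, and $K_\tau$ is the time-$\tau$ flow of $H_{\Delta,B}$.

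\textbf{Drift.} By hypothesis, $\Phi_{0,B}^{\Delta t}$ is the exact flow of $H_{0,B}$. One can also check this directly. Eqs.~\eqref{eq:harm_q}--\eqref{eq:harm_pi} act mode by mode as the $2\times2$ matrix
\begin{equation*}
\begin{pmatrix}\cos\omega_\nu\Delta t & \omega_\nu^{-1}\sin\omega_\nu\Delta t\\ -\omega_\nu\sin\omega_\nu\Delta t & \cos\omega_\nu\Delta t\end{pmatrix},
\end{equation*}
which has determinant $1$. In one degree of freedom this is exactly the condition for preserving $dq_\nu\wedge d\pi_\nu$. The full drift is the direct sum of these blocks over $\nu\in B$, so it preserves $\omega_B$.

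\textbf{Kick.} For $H_{\Delta,B}$, Hamilton's equations give $\dot{\mathbf q}_B=0$, so $\mathbf q_B$ stays constant and $\boldsymbol\pi_B$ changes linearly in time with slope $-\nabla_{\mathbf q_B}H_{\Delta,B}$. The chain rule through the affine reconstruction \eqref{eq:supp_reconstruct_rB} gives
\begin{equation*}
\nabla_{\mathbf q_B}H_{\Delta,B}=W_B^{\mathsf T}M^{-1/2}\nabla\Delta V(\mathbf r_B)=-\mathbf s_B(\mathbf r_B(\mathbf q_B)),
\end{equation*}
using \eqref{eq:supp_DeltaV_def}--\eqref{eq:supp_source_def}. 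Hence the time-$\tau$ flow is exactly $K_\tau$.

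As an independent check, the Jacobian of $K_\tau$ is
\begin{equation*}
\begin{pmatrix} I & 0\\ \tau S & I\end{pmatrix},\qquad S=\partial\mathbf s_B/\partial\mathbf q_B=-W_B^{\mathsf T}M^{-1/2}\,\nabla\nabla\Delta V\,M^{-1/2}W_B .
\end{equation*}
A block-lower-triangular matrix of this form satisfies $J^{\mathsf T}\mathbb J J=\mathbb J$ precisely when $S$ is symmetric, and $S$ is symmetric because it is a congruence of a Hessian.

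\textbf{Conclusion and main obstacle.} Composing $K_{\Delta t/2}\circ\Phi_{0,B}^{\Delta t}\circ K_{\Delta t/2}$ gives a product of symplectic maps, which is symplectic. The only step needing care is identifying the kick with an exact Hamiltonian flow. Two facts make this work:
\begin{itemize}
\item The source is the gradient of a scalar pulled back through the band reconstruction. That requires the force to be evaluated at $\mathbf r_B(\mathbf q_B)$ and projected by exactly $W_B^{\mathsf T}M^{-1/2}$, the transpose of the reconstruction map.
\item The subtracted harmonic baseline must be the exact gradient of the quadratic term in $\Delta V$.
\end{itemize}
If the projection in the source did not match the transpose of the reconstruction, $S$ would fail to be symmetric and the argument would break. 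Checking this compatibility is the key point.
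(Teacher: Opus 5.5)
Your proposal is correct and follows the paper's proof essentially step for step: the same splitting $H_B=H_{0,B}+H_{\Delta,B}$, the same chain-rule identification of $K_\tau$ as the exact flow of $\Delta V\circ\mathbf r_B$, and closure of symplectic maps under composition. Your additional Jacobian checks (unit-determinant rotation blocks and symmetry of $S$) are redundant with the flow argument but valid. Your remark that the source projection must be the transpose of the reconstruction map correctly identifies why the kick is an exact Hamiltonian flow.
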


\begin{proof}
Define the band Hamiltonian
\[
H_B(\mathbf q_B,\boldsymbol\pi_B)=H_{0,B}(\mathbf q_B,\boldsymbol\pi_B)+H_{\Delta,B}(\mathbf q_B),
\qquad
H_{\Delta,B}(\mathbf q_B):=\Delta V\bigl(\mathbf r_B(\mathbf q_B)\bigr).
\]

(i) $\Phi_{0,B}^{\Delta t}$ is the exact Hamiltonian flow of $H_{0,B}$, hence symplectic.

(ii) For $H_{\Delta,B}$, Hamilton's equations give $\dot{\mathbf q}_B=\nabla_{\boldsymbol\pi_B}H_{\Delta,B}=0$ and
$\dot{\boldsymbol\pi}_B=-\nabla_{\mathbf q_B}H_{\Delta,B}$.
Using \eqref{eq:supp_reconstruct_rB} and the chain rule,
\[
\nabla_{\mathbf q_B}H_{\Delta,B}(\mathbf q_B)
=(M^{-1/2}W_B)^{\mathsf T}\nabla_{\mathbf r}\Delta V(\mathbf r_B)
=-(W_B^{\mathsf T}M^{-1/2})\,\mathbf F_\Delta(\mathbf r_B)
=-\mathbf s_B(\mathbf r_B),
\]
so $\dot{\boldsymbol\pi}_B=\mathbf s_B(\mathbf r_B(\mathbf q_B))$.
Therefore the exact time-$\tau$ flow of $H_{\Delta,B}$ is precisely the shear $K_{\tau}$ in \eqref{eq:supp_kdk_map}, and hence is symplectic.

(iii) The full step $\Psi_B^{\Delta t}$ is the Strang composition of symplectic maps in \eqref{eq:supp_kdk_map}, so it is symplectic.
\end{proof}

\begin{figure}[hp]
  \centering
  \includegraphics[width=0.55\columnwidth]{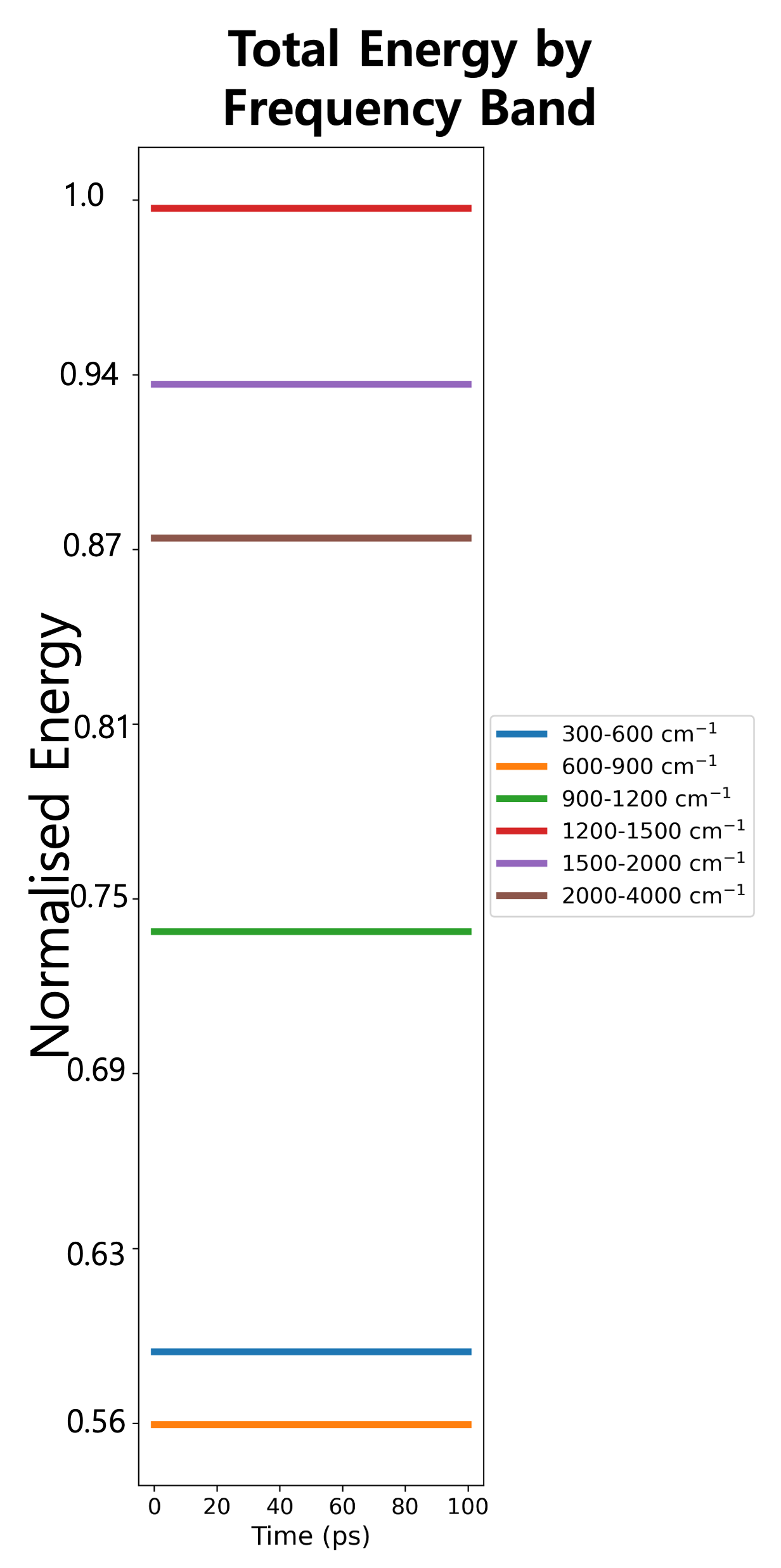}
  \caption{%
    Normalised (where the maximum energy is set to 1) total band energies
    \(E_B(t)\) for band-limited FIMD NVE trajectories of
    Ace--Phe--Tyr--NMe over \(100~\mathrm{ps}\).
    Each horizontal trace corresponds to one of the six frequency windows
    indicated in the legend.
    The near-perfect flatness of all curves indicates that the band-limited
    dynamics conserve total energy within each retained frequency band to
    numerical precision.}
  \label{fig:total_energy_by_band}
\end{figure}

\begin{figure*}[htbp]
  \centering
  \includegraphics[width=0.9\textwidth]{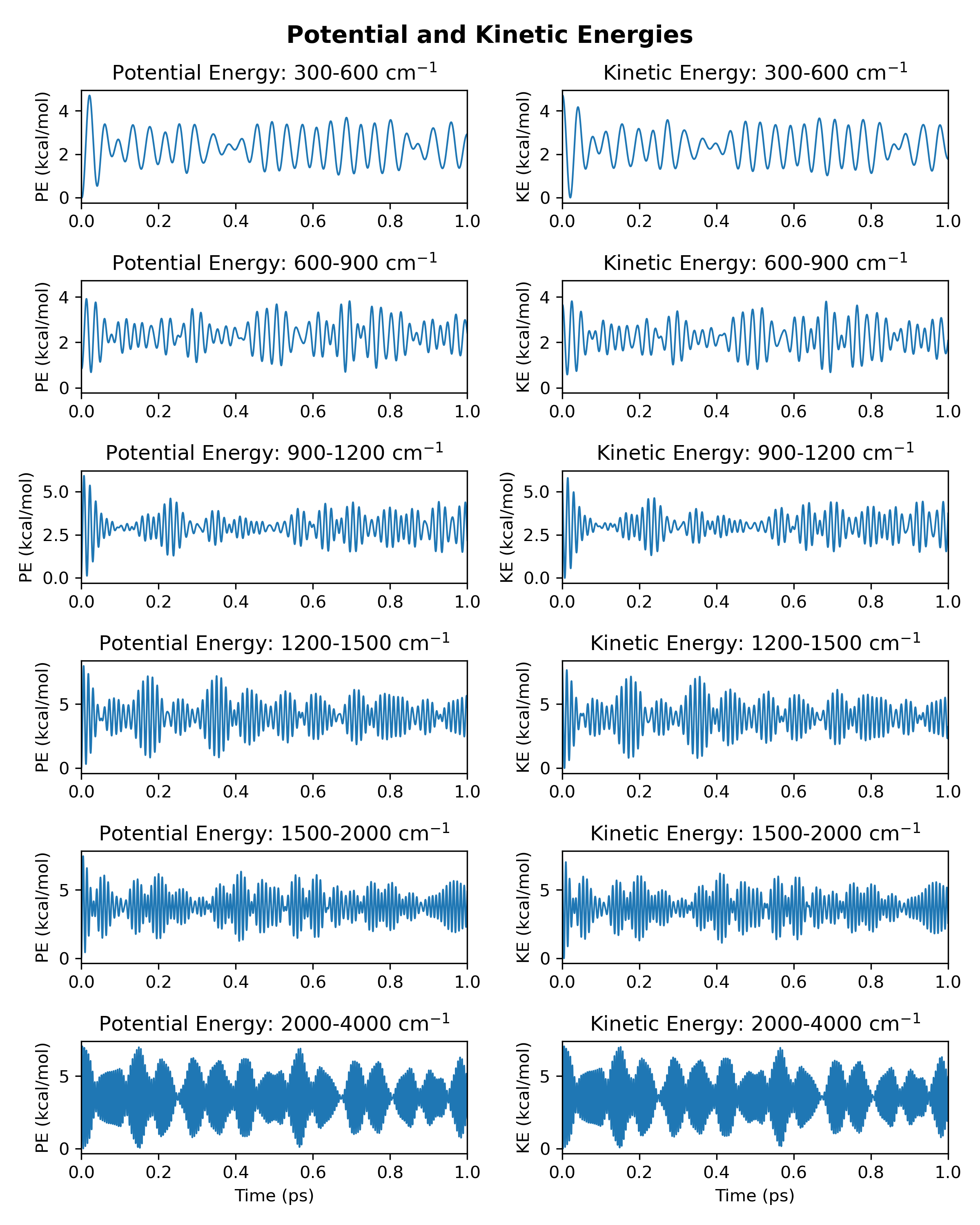}
  \caption{%
    Band-resolved potential (left column) and kinetic (right column) energies
    for the same band-limited NVE trajectories as in
    Fig.~\ref{fig:total_energy_by_band}, over the first
    \(1~\mathrm{ps}\).
    Each row corresponds to one frequency window
    \(B\in\{300\text{--}600, 600\text{--}900, 900\text{--}1200,
    1200\text{--}1500, 1500\text{--}2000, 2000\text{--}4000~\mathrm{cm}^{-1}\}\).
    The traces show near-harmonic exchange between potential and kinetic
    components with a band-dependent modulation pattern, while the total
    band energy \(E_B(t)\) remains essentially constant as in
    Fig.~\ref{fig:total_energy_by_band}.
  }
  \label{fig:pe_ke_by_band}
\end{figure*}

\section{Remark on NVE band-limited tests and sonification of the potential energy profiles}

Figure~\ref{fig:total_energy_by_band} summarises the band-resolved NVE diagnostics
for Ace--Phe--Tyr--NMe under band-limited FIMD.
For each frequency window
\begin{equation}
    B\in\{300\text{--}600~\mathrm{cm}^{-1}, 600\text{--}900~\mathrm{cm}^{-1}, 900\text{--}1200~\mathrm{cm}^{-1}, 1200\text{--}1500~\mathrm{cm}^{-1}, 1500\text{--}2000~\mathrm{cm}^{-1}, 2000\text{--}4000~\mathrm{cm}^{-1}\},
\end{equation}
the total band energy \(E_B(t)=K_B(t)+V_B(t)\) remains essentially constant over \(100~\mathrm{ps}\), with no resolvable secular drift at the plotting scale.
This is consistent with the interpretation of the band-limited Liouville generator as a microcanonical evolution on the selected vibrational subspace:
the exact harmonic block (the FIMD rotation step) does not introduce numerical heating, and residual energy exchange with discarded modes is negligible on these timescales.

The short-time (1 ps) structure of intra-band energy exchange is shown in Fig.~\ref{fig:pe_ke_by_band}.
For each window, the band-resolved potential and kinetic energies,
\(V_B(t)\) and \(K_B(t)\), display strongly correlated oscillations with a nearly constant sum \(E_B(t)\).
The modulation pattern becomes progressively faster and more structured with increasing frequency, but the character remains that of a near-harmonic exchange between \(V_B\) and \(K_B\) on the invariant circles generated by the harmonic Liouville block.
This behaviour is what one expects from the exact rotation step in the \((q_\nu,p_\nu)\)-planes combined with weak anharmonic source terms.

To provide a perceptual representation of this band-resolved exchange, the time series \(V_B(t)\) underlying
Fig.~\ref{fig:pe_ke_by_band} were mapped to audio waveforms by linear time rescaling and amplitude normalisation.
Representative audio files are available at
\begin{equation}
  \texttt{\url{https://kyunghoon-han.github.io/FIMD_sonification.html}},
\end{equation}
allowing the reader to listen to the band-limited NVE dynamics generated by the FIMD scheme.
\subsection{NVT stationarity via Maxwell--Boltzmann velocity statistics (Ace--Phe--Tyr--NMe)}

A basic canonical (NVT) validation is that the atomic velocities sample the Maxwell--Boltzmann (MB) distribution at the target
temperature. For a particle of mass $m$ in three dimensions, the MB \emph{speed} distribution is
\begin{equation}
p(v; m,T)=4\pi \left(\frac{m}{2\pi k_{\mathrm B}T}\right)^{3/2} v^{2}\exp\!\left(-\frac{mv^{2}}{2k_{\mathrm B}T}\right),
\label{eq:MB_speed}
\end{equation}
and for a molecular system with multiple atomic masses the all-atom speed distribution is a mixture
$p_{\mathrm{mix}}(v)=\sum_i w_i\,p(v;m_i,T)$ with weights $w_i$ proportional to the counts of each atomic species.

\begin{figure}[htb]
  \centering
  \includegraphics[width=\textwidth]{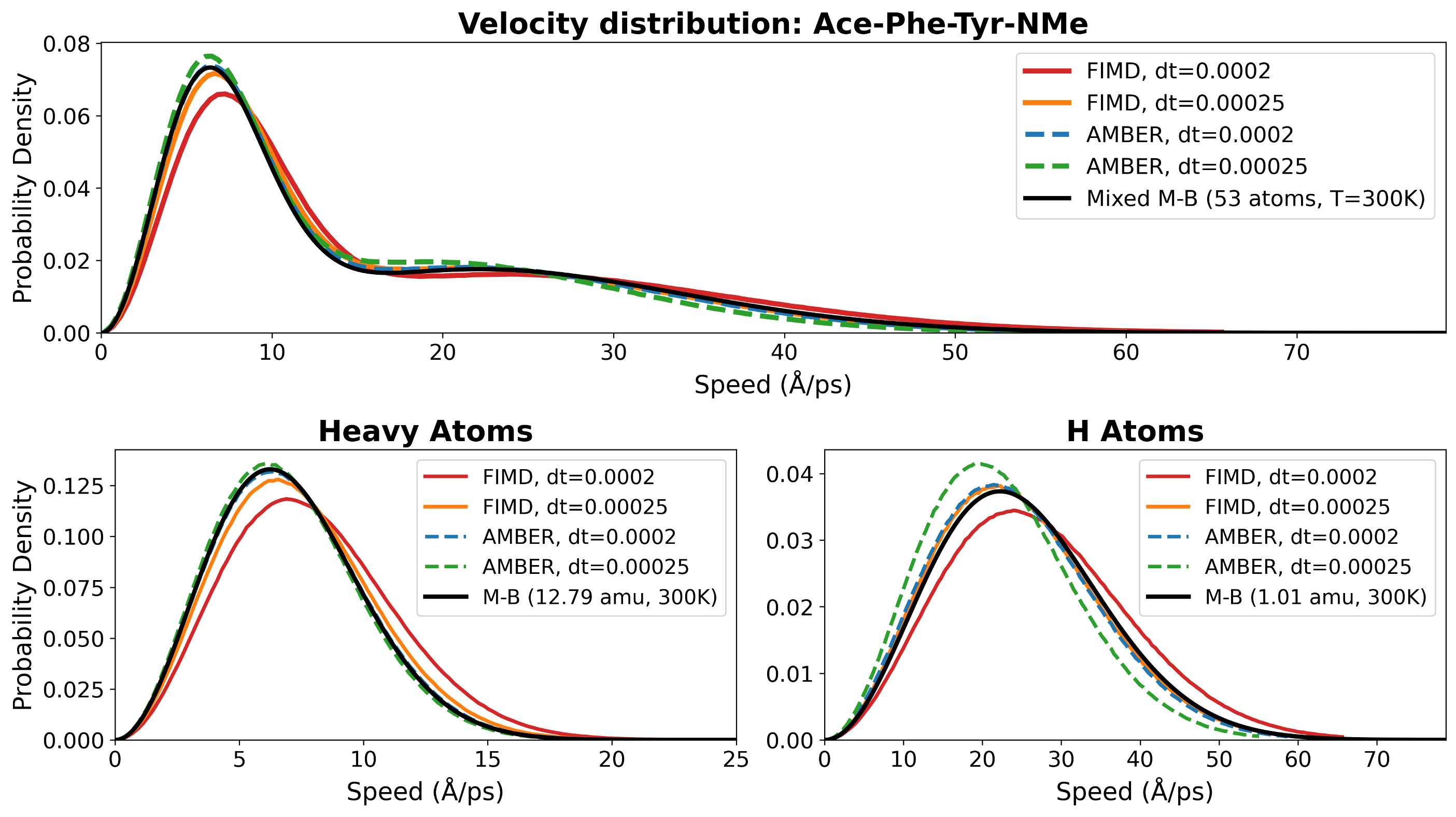}
  \caption{\textbf{Canonical (NVT) velocity statistics for Ace--Phe--Tyr--NMe at $T=300$ K.}
  Top: all-atom speed distributions from FIMD and a conventional AMBER trajectory, each run at two time steps
  ($\Delta t=0.0002$ and $0.00025$ ps), compared to the corresponding analytic mixed Maxwell--Boltzmann distribution
  (black; 53 atoms).
  Bottom: mass-resolved speed distributions for heavy atoms (left; effective mass $12.79$ amu) and hydrogen atoms (right;
  mass $1.01$ amu), compared to their analytic MB curves (black). Agreement of both the mixture and mass-resolved marginals
  indicates stationary sampling of the canonical momentum distribution at the target temperature.}
  \label{fig:velocity_MB_test}
\end{figure}

Figure~\ref{fig:velocity_MB_test} shows that FIMD reproduces the expected MB speed statistics in NVT, both for the full
all-atom mixture and when separating heavy atoms from hydrogens (which have distinctly different most-probable speeds due
to the mass dependence in Eq.~\eqref{eq:MB_speed}). The FIMD curves closely track the conventional MD reference and the
corresponding analytic MB targets over the full range of speeds; any visible differences are confined to the far tails,
where finite-sample noise is largest. This supports the canonical stationarity of the band-subspace thermostat in practice,
as assessed through Cartesian velocity marginals.

\begin{figure}[tb]
\centering
\includegraphics[width=\textwidth]{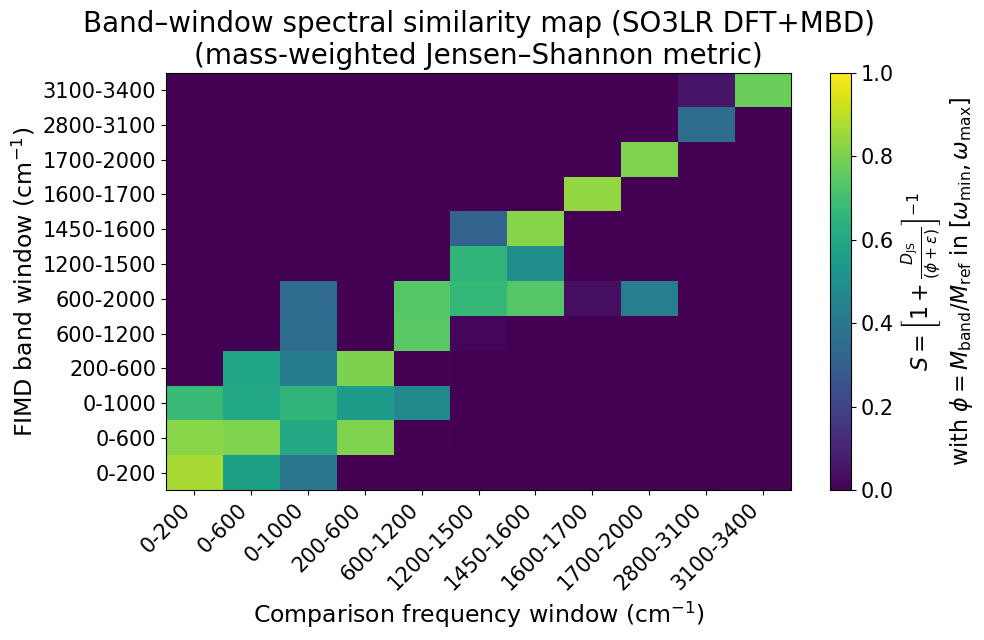}
\caption{\textbf{Frequency-windowed Jensen--Shannon comparison between band-limited FIMD and conventional SO3LR VDOS.} Rows denote band-limited FIMD runs and columns denote comparison windows $[\omega_{\min},\omega_{\max}]$. Within each window we normalise the spectral mass, compute the Jensen--Shannon distance, and report the mass-weighted similarity score $S$ (Eq.~\ref{eq:score}). Correspondence is highest along the diagonal, while off-diagonal agreement appears when the comparison window overlaps the propagated band. Quantitatively, the diagonal entries dominate: $\langle S\rangle_{\mathrm{diag}}=0.74$ (var.\ $2.1\times10^{-2}$) versus $\langle S\rangle_{\mathrm{off}}=0.089$ (var.\ $4.5\times10^{-2}$), giving a diagonal/off-diagonal mean ratio of $\sim$8.3. This diagonal dominance is the strongest among the SO3LR, ORCA xTB, and AMBER ff14SB comparisons.}
\label{fig:so3lr_comparison}
\end{figure}

\begin{figure}[tb]
\centering
\includegraphics[width=\textwidth]{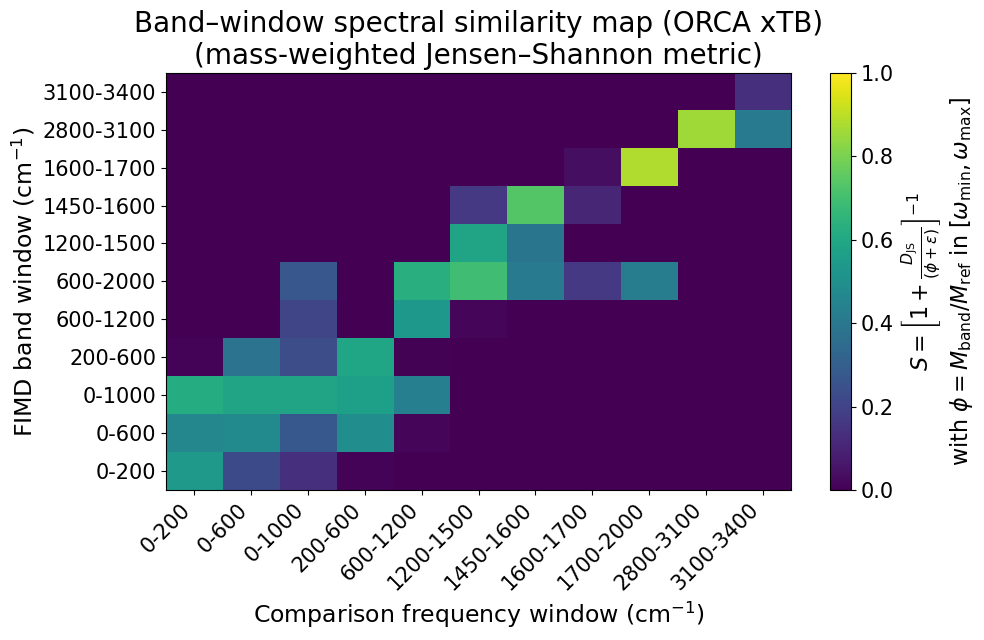}
\caption{\textbf{Frequency-windowed Jensen--Shannon comparison between band-limited FIMD and conventional ORCA (xTB) VDOS.} Rows denote band-limited FIMD runs and columns denote comparison windows $[\omega_{\min},\omega_{\max}]$. Within each window we normalise the spectral mass, compute the Jensen--Shannon distance, and report the mass-weighted similarity score $S$ (Eq.~\ref{eq:score}). Correspondence is strongest along the diagonal, with off-diagonal agreement when the comparison window overlaps the propagated band; compared with SO3LR, the low-frequency windows show systematically lower scores. Quantitatively, the diagonal remains dominant: $\langle S\rangle_{\mathrm{diag}}=0.51$ (var.\ $6.2\times10^{-2}$) versus $\langle S\rangle_{\mathrm{off}}=0.083$ (var.\ $3.4\times10^{-2}$), giving a diagonal/off-diagonal mean ratio of $\sim$6.1. This diagonal tendency is weaker than for SO3LR but stronger than for AMBER ff14SB.}
\label{fig:orca_comparison}
\end{figure}

\begin{figure}[tb]
\centering
\includegraphics[width=\textwidth]{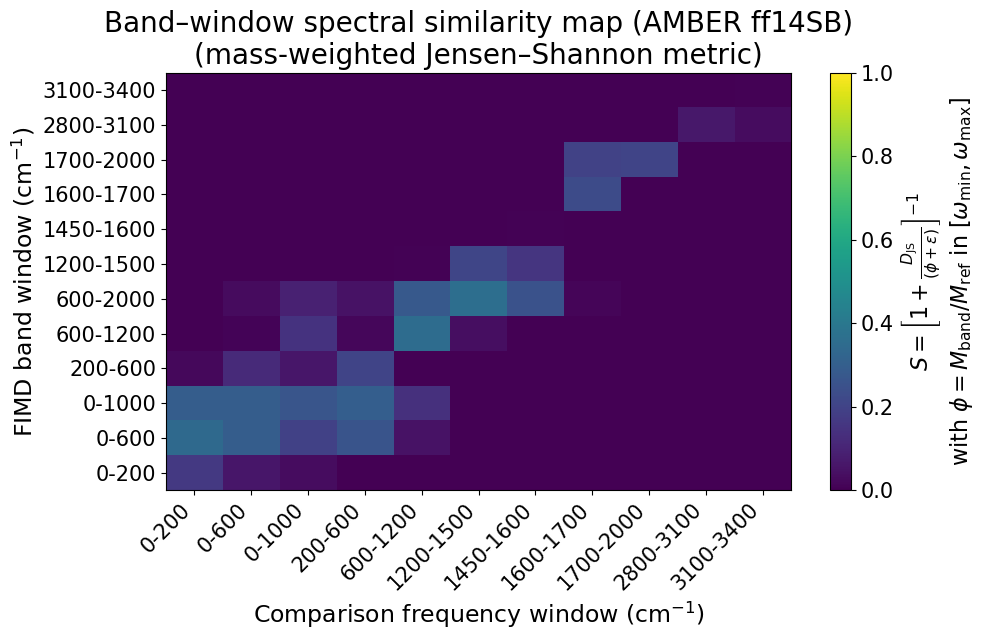}
\caption{\textbf{Frequency-windowed Jensen--Shannon comparison between band-limited FIMD and conventional AMBER VDOS.} Rows denote band-limited FIMD runs and columns denote comparison windows $[\omega_{\min},\omega_{\max}]$. Within each window we normalise the spectral mass, compute the Jensen--Shannon distance, and report the mass-weighted similarity score $S$ (Eq.~\ref{eq:score}). Correspondence is strongest along the diagonal, with off-diagonal agreement when the comparison window overlaps the propagated band; overall scores are lower than for ORCA (xTB) and SO3LR. Quantitatively, the diagonal remains dominant but with reduced absolute agreement: $\langle S\rangle_{\mathrm{diag}}=0.18$ (var.\ $1.3\times10^{-2}$) versus $\langle S\rangle_{\mathrm{off}}=0.032$ (var.\ $6.5\times10^{-3}$), giving a diagonal/off-diagonal mean ratio of $\sim$5.7. Note that, although Fig.~\ref{fig:vdos_three_panel} shows clear peak alignment in the 3100--3400~cm$^{-1}$ region, the corresponding FIMD peaks are markedly narrower than in the conventional trajectory, which strongly reduces the windowed similarity score despite the apparent agreement in peak positions.}
\label{fig:amber_comparison}
\end{figure}

\section*{VDOS similarity between band-limited FIMD runs vs. conventional MD runs per band}

Figures \ref{fig:so3lr_comparison}, \ref{fig:orca_comparison}, and \ref{fig:amber_comparison} show the band-wise similarity between the VDOS from band-limited FIMD (y-axis) and the VDOS from the corresponding conventional velocity–Verlet trajectories (x-axis) for each force field. 
Similarity is quantified via the Jensen–Shannon (JS) distance evaluated over selected comparison windows (x-axis). For each window $[\omega_{\min},\omega_{\max}]$, we restrict both spectra to that interval (discarding all contributions outside it), interpolate them onto a common frequency grid, and convert the windowed spectra into probability distributions by normalising the integrated spectral mass to unity. Concretely, if $s(\omega)$ denotes the windowed VDOS sampled on grid points ${\omega_i}$ with spacings $\Delta\omega_i$, we form the discrete probability mass
\begin{equation}
p_i=\frac{s(\omega_i),\Delta\omega_i}{\sum_j s(\omega_j),\Delta\omega_j},
\qquad \omega_i\in[\omega_{\min},\omega_{\max}],
\end{equation}
and analogously $q_i$ for the band-limited FIMD spectrum. The JS distance is then computed as
\begin{equation}
D_{\mathrm{JS}}(p,q)=\sqrt{\tfrac12 D_{\mathrm{KL}}(p,|,m)+\tfrac12 D_{\mathrm{KL}}(q,|,m)},
\qquad m=\tfrac12(p+q),
\end{equation}
with $D_{\mathrm{KL}}(p,|,m)=\sum_i p_i\log_2\big(p_i/m_i\big)$.

To incorporate not only shape agreement but also whether a given band-limited run places appreciable spectral weight in the comparison window, we additionally report a mass-weighted similarity score. Define the window mass for spectrum $s$ as
\begin{equation}
M_{[\omega_{\min},\omega_{\max}]}(s)=\int_{\omega_{\min}}^{\omega_{\max}} s(\omega),d\omega,
\end{equation}
and the corresponding mass fraction relative to the conventional reference spectrum $s_{\mathrm{ref}}$ as
\begin{equation}
\phi=\frac{M_{[\omega_{\min},\omega_{\max}]}(s_{\mathrm{band}})}{M_{[\omega_{\min},\omega_{\max}]}(s_{\mathrm{ref}})}.
\end{equation}
We then define the combined similarity
\begin{equation} \label{eq:score}
S=\frac{1}{1 + D_{\mathrm{JS}}(p,q)/(\phi+\varepsilon)},
\end{equation}
where $\varepsilon>0$ prevents division by zero when the window mass is negligible. By construction, $S\in(0,1]$ increases with improving spectral-shape agreement (smaller $D_{\mathrm{JS}}$) and decreases when the band-limited spectrum carries little weight in the window.
We choose this score because the JS distance alone can be deceptively small when both windowed spectra are dominated by weak, noise-like tails after normalisation; incorporating the mass fraction, $\phi$, down-weights such cases and highlights windows where the band-limited dynamics both reproduces the reference spectral shape and captures a non-negligible portion of the reference spectral weight.

Across all three force fields, the band--window maps show a clear diagonal: the highest similarity occurs when the comparison window coincides with (or lies within) the propagated FIMD band, as expected for band-limited dynamics. Off-diagonal agreement is also present when the comparison window overlaps the propagated band.

In the low-frequency region, the scores are systematically lowest for AMBER ff14SB, intermediate for ORCA xTB, and highest for SO3LR (DFT+MBD).
This ordering is consistent with the qualitative structure of the corresponding low-band dynamics in Fig.~2(d)--(f): AMBER ff14SB exhibits the most space-filling, weakly organised projected trajectories, ORCA xTB shows intermediate coherence, and SO3LR displays the most regular, Lissajous-like phase-space loops.
Taken together, the JS-based window scores and the 2D PCA orbits indicate a progressive sharpening of well-defined low-frequency collective motion from ff14SB to xTB to SO3LR, i.e.\ increasingly coherent oscillatory content and reduced low-frequency “noise-like” spreading in the projected band dynamics.

For AMBER ff14SB, we also note that despite clear peak alignment in the 3100--3400~cm$^{-1}$ region (Fig.~\ref{fig:vdos_three_panel}), the corresponding FIMD peaks are markedly narrower than in the conventional trajectory. This width mismatch strongly reduces the windowed similarity score even when peak positions agree.

\section*{Phase mutual-information maps of Fourier-resolved modes}

\begin{figure}[t]
  \centering
  \includegraphics[width=0.86\linewidth]{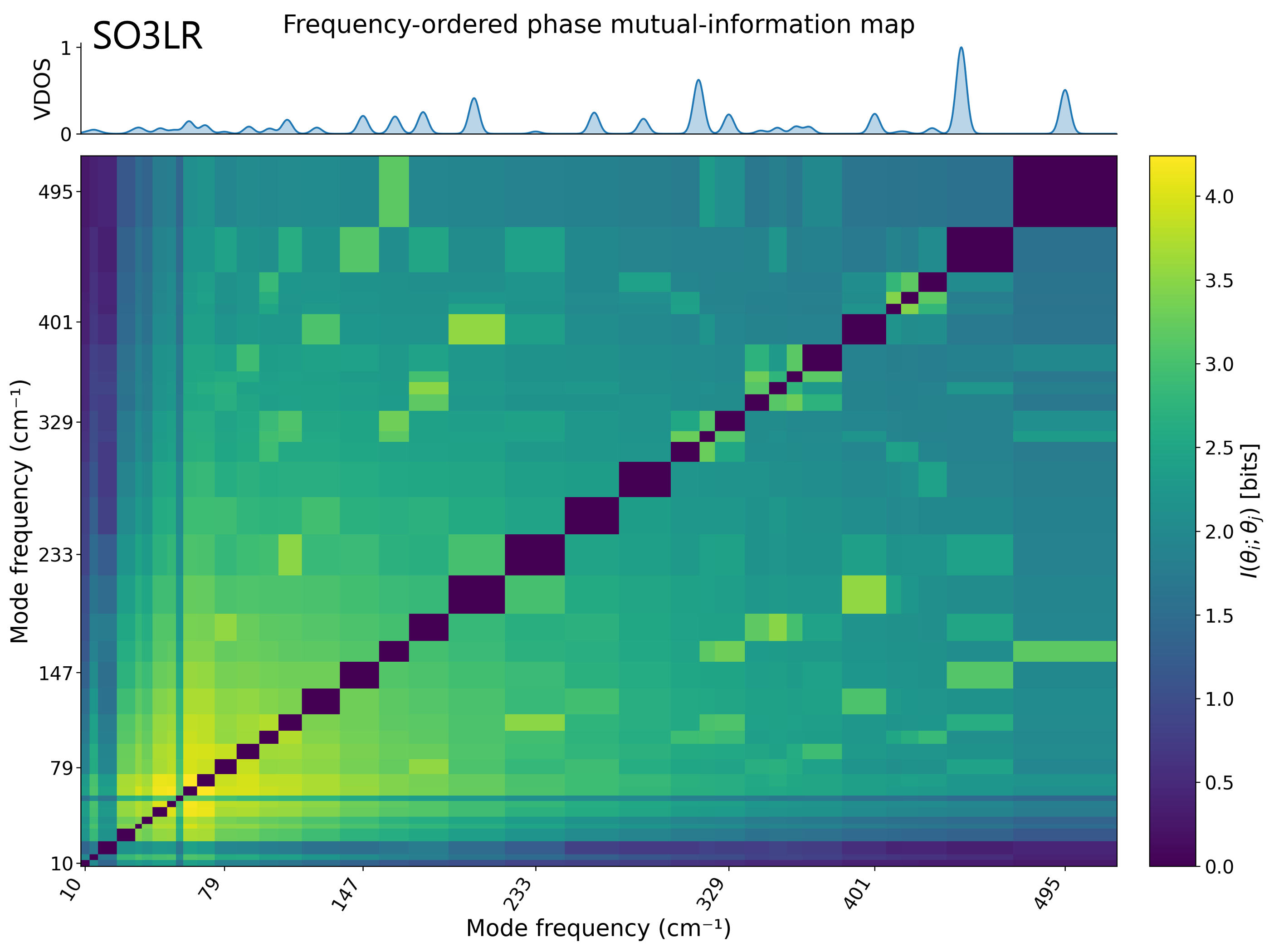}
  \caption{
  Frequency-ordered phase mutual-information map for Ace--Phe--Tyr--NMe
  propagated with SO3LR. The map shows a pronounced low-frequency block
  together with extended low-to-mid-frequency phase dependencies. Within the
  present estimator, this indicates that SO3LR resolves the most clearly
  organised soft-mode phase structure among the three force models considered
  here, with low-frequency modes retaining phase information that is shared
  with broader anharmonic motion.
  }
  \label{fig:supp_phase_mi_so3lr}
\end{figure}

\begin{figure}[t]
  \centering
  \includegraphics[width=0.86\linewidth]{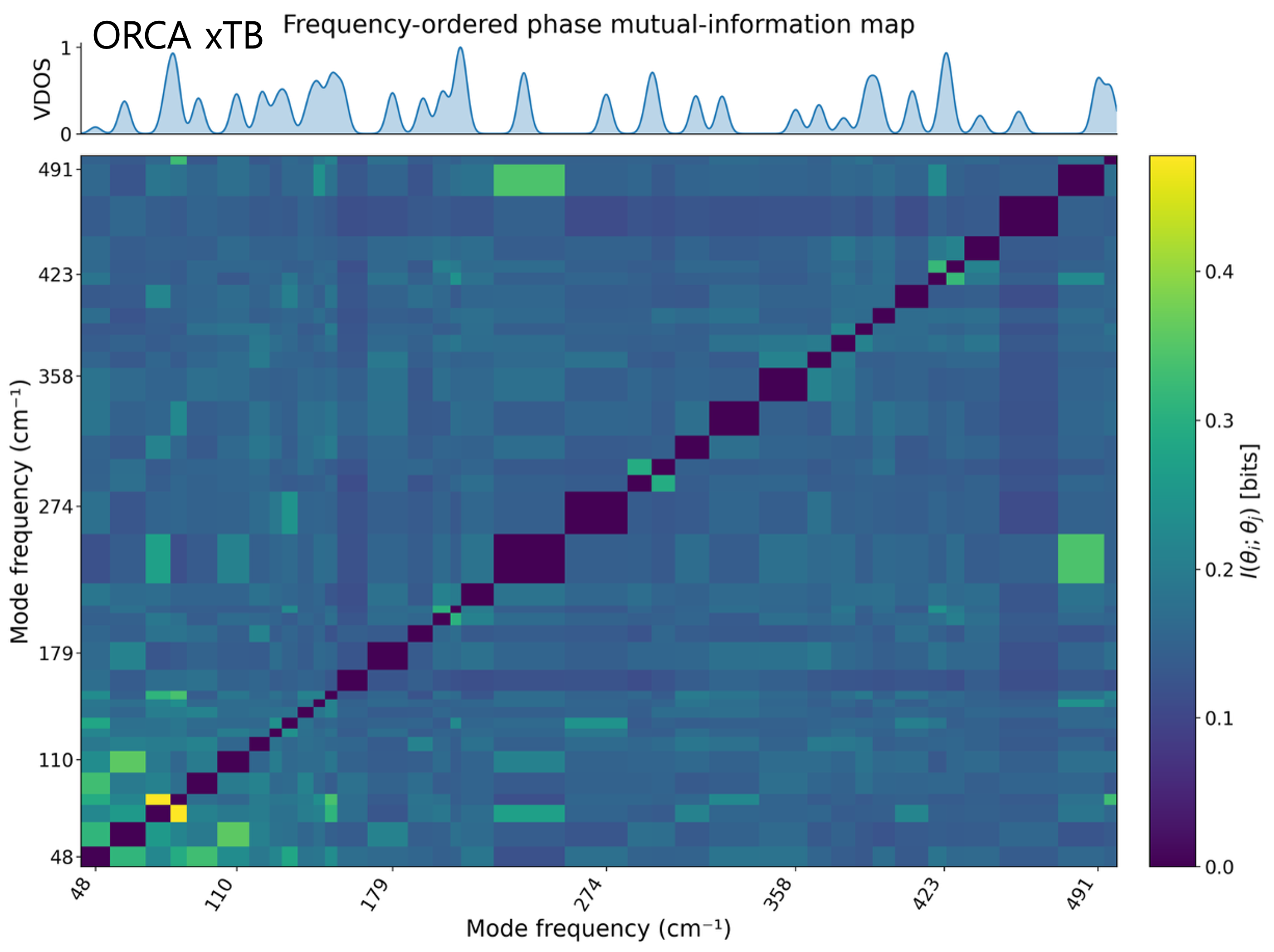}
  \caption{
  Frequency-ordered phase mutual-information map for Ace--Phe--Tyr--NMe
  propagated with ORCA xTB. Compared with SO3LR, the xTB map shows a less
  concentrated low-frequency block and a more diffuse background of finite
  phase mutual information. This corresponds to an intermediate level of
  low-frequency phase organisation: appreciable phase dependence is present
  across the low- and mid-frequency manifold, but the soft-mode structure is
  less sharply localised than in SO3LR.
  }
  \label{fig:supp_phase_mi_xtb}
\end{figure}

\begin{figure}[t]
  \centering
  \includegraphics[width=0.86\linewidth]{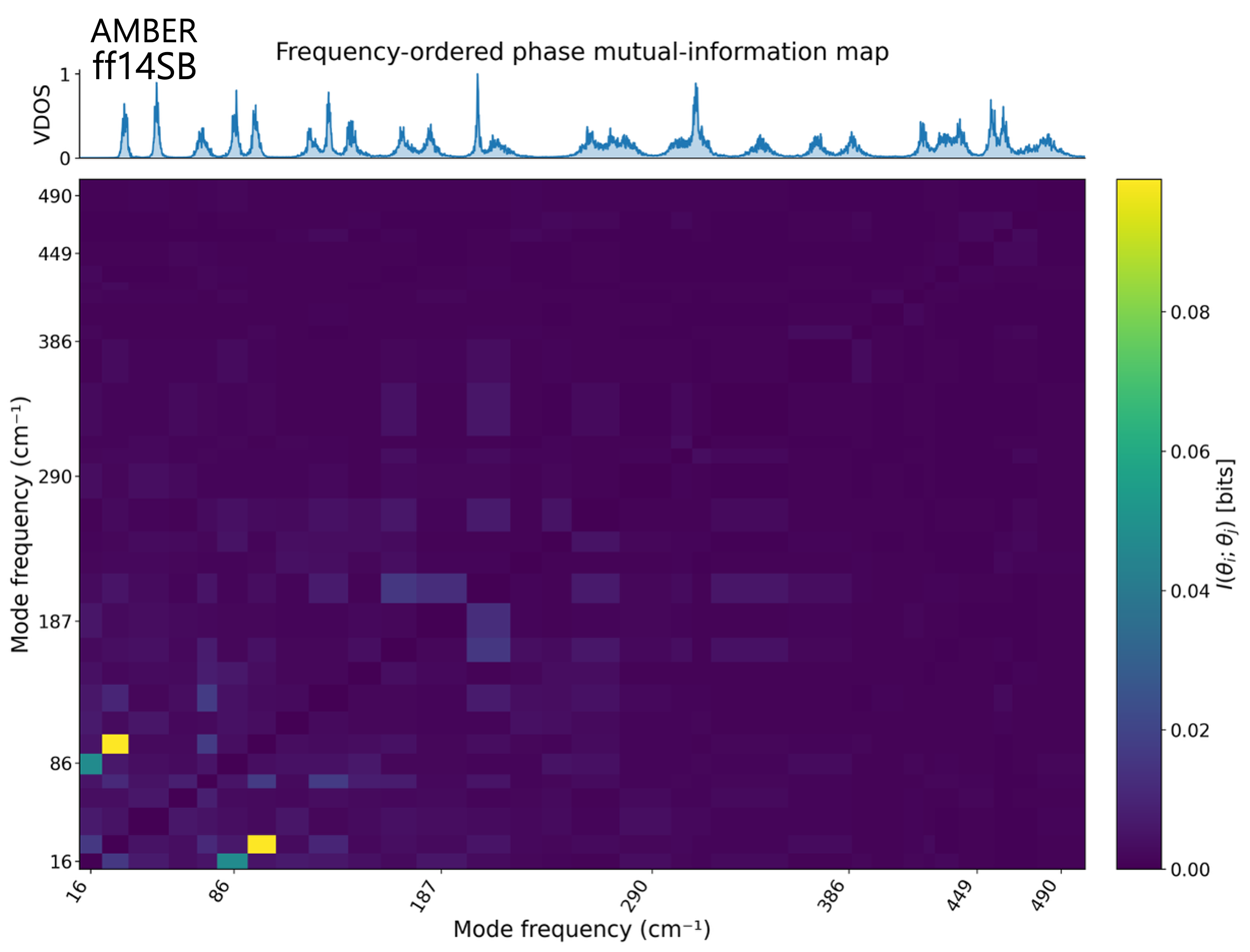}
  \caption{
  Frequency-ordered phase mutual-information map for Ace--Phe--Tyr--NMe
  propagated with AMBER ff14SB. Compared with SO3LR and ORCA xTB, the AMBER
  map is the sparsest, with weak off-diagonal phase dependence concentrated
  mainly in a limited subset of low-frequency modes. In the present comparison,
  this indicates a more weakly organised low-frequency phase-coupling structure
  and dynamics closer to a separable modal picture within this frequency
  range.
  }
  \label{fig:supp_phase_mi_amber}
\end{figure}

Figures~\ref{fig:supp_phase_mi_so3lr}--\ref{fig:supp_phase_mi_amber}
show that the three force models differ not only in their VDOS peak positions
and intensities, but also in the organisation of their nonlinear
phase-coupling patterns. SO3LR gives the most clearly structured
low-frequency manifold, with soft-mode phase organisation extending into
low-to-mid-frequency couplings. ORCA xTB shows intermediate behaviour, with
appreciable but more diffuse phase dependence. AMBER ff14SB yields the
sparsest phase-dependence map, indicating weaker and more localised
off-diagonal coupling in this frequency range. Thus, phase mutual information
provides a complementary force-field comparison metric: beyond reproducing
spectral peaks, it tests how clearly a model resolves the collective
low-frequency organisation and nonlinear coupling structure of the vibrational
dynamics. This comparison should not be read as an absolute ranking of force
fields; rather, it distinguishes how each model represents low-frequency phase
coherence under the same analysis protocol.

The maps were constructed as follows. For each force model
$\alpha\in\{\mathrm{SO3LR},\mathrm{ORCA~xTB},\mathrm{AMBER~ff14SB}\}$,
the trajectory was expressed in mass-weighted modal coordinates after removal
of overall translation and rotation:
\begin{equation}
  x^{(\alpha)}(t_n)
  =
  M^{1/2}\left(r^{(\alpha)}(t_n)-r_0^{(\alpha)}\right),
  \qquad
  q_\nu^{(\alpha)}(t_n)
  =
  \left(w_\nu^{(\alpha)}\right)^T x^{(\alpha)}(t_n),
\end{equation}
with conjugate modal momentum
\begin{equation}
  \pi_\nu^{(\alpha)}(t_n)
  =
  \left(w_\nu^{(\alpha)}\right)^T M^{-1/2}p^{(\alpha)}(t_n).
\end{equation}
The VDOS traces were detrended and decomposed into Lorentzian components using
the TIHI toolkit~\cite{HanTihi2024}; the resulting Lorentzian peak centres
define the frequency markers $\widetilde{\nu}_\nu^{(\alpha)}$ used in the
mode-resolved plots. The corresponding angular frequency is
\begin{equation}
  \omega_\nu^{(\alpha)}
  =
  2\pi c\,\widetilde{\nu}_\nu^{(\alpha)} .
\end{equation}

The instantaneous phase of mode $\nu$ was then defined from the scaled modal
phase-plane coordinate
\begin{equation}
  z_\nu^{(\alpha)}(t_n)
  =
  q_\nu^{(\alpha)}(t_n)
  -
  i\,\frac{\pi_\nu^{(\alpha)}(t_n)}
          {\omega_\nu^{(\alpha)}},
  \qquad
  \theta_\nu^{(\alpha)}(t_n)
  =
  \arg z_\nu^{(\alpha)}(t_n)
  \in [0,2\pi).
\end{equation}
This convention corresponds to the harmonic parametrisation
$q_\nu=A_\nu\cos\theta_\nu$ and
$\pi_\nu=-A_\nu\omega_\nu\sin\theta_\nu$; changing the overall sign convention
does not change the mutual-information estimate.

For each pair of modes $(i,j)$, the circular phase samples
$\{(\theta_i(t_n),\theta_j(t_n))\}_{n=1}^{N_t}$ were binned on
$[0,2\pi)\times[0,2\pi)$, giving empirical probabilities
$\widehat p_{ij}^{ab}$ and marginals $\widehat p_i^a,\widehat p_j^b$.
The plotted matrix entry is the phase mutual information
\begin{equation}
  \widehat I_{ij}^{(\alpha)}
  =
  \sum_{a,b}
  \widehat p_{ij}^{ab}
  \log_2
  \left(
  \frac{\widehat p_{ij}^{ab}}
       {\widehat p_i^a\widehat p_j^b}
  \right),
\end{equation}
reported in bits. Diagonal entries were set to zero, since
$I(\theta_i;\theta_i)$ is a marginal phase entropy rather than an inter-mode
coupling measure.

The heat-map axes are physical frequency axes rather than uniform mode-index
axes. Cell boundaries were placed halfway between neighbouring peak centres,
\begin{equation}
  b_{\nu+1/2}
  =
  \frac{\widetilde{\nu}_\nu+\widetilde{\nu}_{\nu+1}}{2},
\end{equation}
with one-sided extensions at the endpoints. Thus the unequal cell sizes reflect
the non-uniform spacing of the Fourier-resolved vibrational frequencies, not a
plotting artefact.

In a separable harmonic system with factorised phase statistics, the
off-diagonal mutual information vanishes:
\begin{equation}
  p(\theta_i,\theta_j)=p_i(\theta_i)p_j(\theta_j),
  \qquad
  I(\theta_i;\theta_j)=0
  \quad (i\neq j).
\end{equation}
Non-zero off-diagonal entries therefore indicate statistical phase dependence
between modes. In molecular dynamics, such dependence can arise from
anharmonic coupling, phase locking, common driving, or multi-mode collective
organisation. Consequently, the phase mutual-information map should be read as
a nonlinear statistical diagnostic of vibrational phase organisation; a finite
pairwise MI value does not, by itself, imply a direct pairwise force-constant
coupling.
\section{Time-step stability analysis for low-frequency SO3LR FIMD}
\label{sec:supp-timestep-stability}

To assess the practical time-step stability of the band-limited Fourier
integrator, we performed a stability scan for the
\(0\!-\!200~\mathrm{cm}^{-1}\) SO3LR FIMD trajectory of
Ace--Phe--Tyr--NMe.  This low-frequency window was chosen because it is
the regime where band limitation most strongly relaxes the time-scale
separation relative to conventional Cartesian molecular dynamics, while
still retaining thermodynamically important soft collective modes.

Figure~\ref{fig:supp-timestep-stability} shows the relative total-energy
deviation,
\[
  \frac{\Delta E(t)}{E_0}
  =
  \frac{E(t)-E_0}{E_0},
\]
for conventional Cartesian MD and for band-limited FIMD runs using
different time steps.  The conventional MD reference was propagated with
\(\Delta t=1.0~\mathrm{fs}\).  The FIMD trajectories were propagated in
the \(0\!-\!200~\mathrm{cm}^{-1}\) band using
\(\Delta t=0.5\), \(2.5\), \(4.0\), and \(4.5~\mathrm{fs}\).

The \(0.5\), \(2.5\), and \(4.0~\mathrm{fs}\) FIMD trajectories remain
bounded over the simulation window, with no visible secular energy
growth at the scale of the plot.  By contrast, the
\(\Delta t=4.5~\mathrm{fs}\) trajectory displays a rapid instability
near the beginning of the run, visible as a sharp growth of
\(\Delta E/E_0\).  For the present SO3LR peptide system and
implementation, this places the practical stability threshold between
\(4.0\) and \(4.5~\mathrm{fs}\) for the \(0\!-\!200~\mathrm{cm}^{-1}\)
band.

This stability limit should not be interpreted as a simple Nyquist
bound.  Although band limitation removes high-frequency modes from the
propagated subspace, the actual stability is controlled by the
second-order splitting error, the residual anharmonic force, and
nonlinear coupling within the retained band.  The time-step scan
therefore provides an empirical operational criterion for the present
system: low-frequency FIMD can be propagated stably with substantially
larger time steps than the conventional reference, but the admissible
step size remains limited by the residual-force dynamics rather than by
frequency sampling alone.

\begin{figure}[t]
  \centering
  \includegraphics[width=0.92\linewidth]{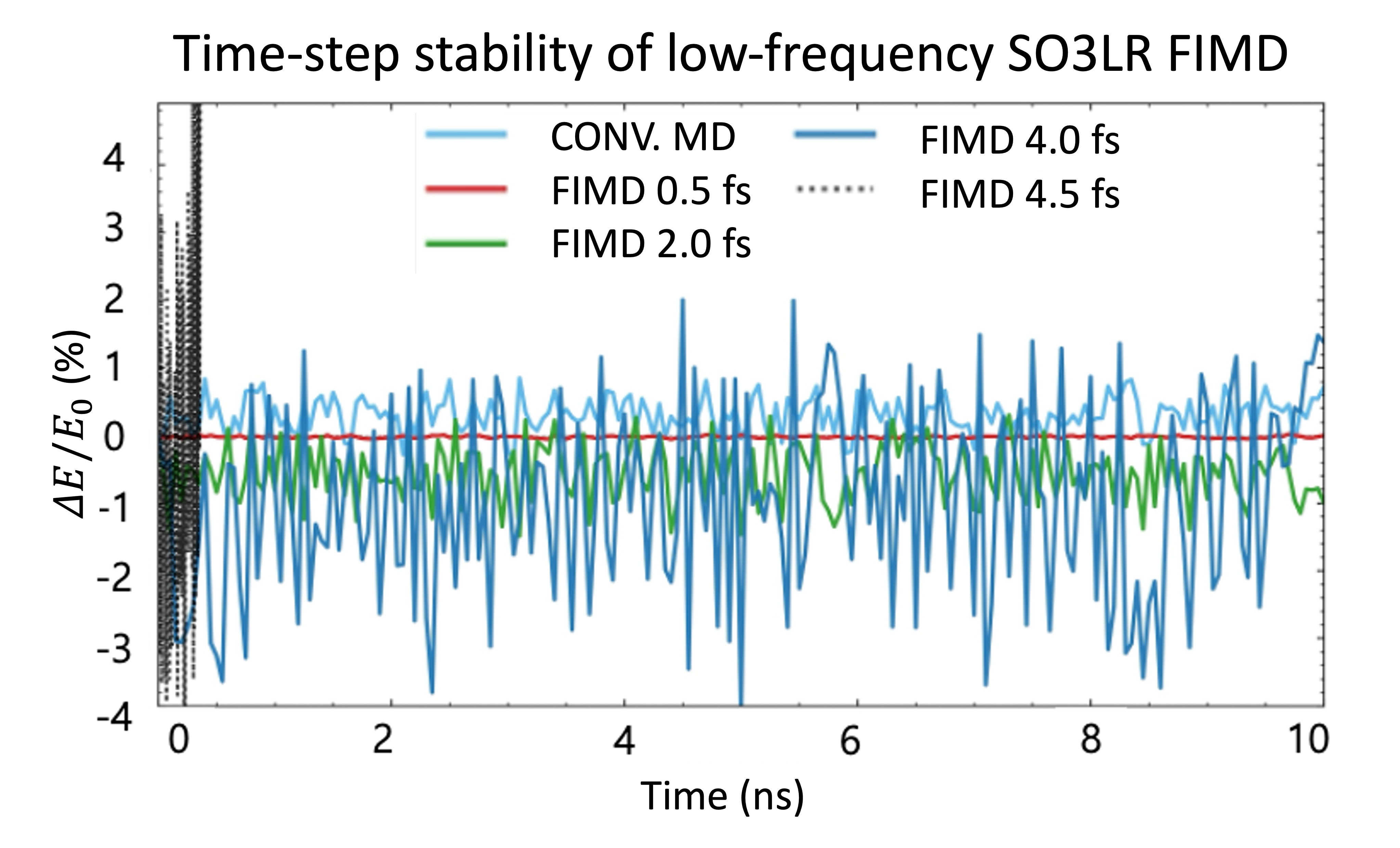}
  \caption{
  \textbf{Time-step stability of low-frequency SO3LR FIMD.}
  Relative total-energy deviation,
  \(\Delta E/E_0\), for conventional Cartesian MD and for
  \(0\!-\!200~\mathrm{cm}^{-1}\) band-limited FIMD trajectories of
  Ace--Phe--Tyr--NMe propagated with different time steps.  The
  conventional MD reference uses \(\Delta t=1.0~\mathrm{fs}\).  The
  FIMD trajectories with \(\Delta t=0.5\), \(2.5\), and
  \(4.0~\mathrm{fs}\) remain bounded over the plotted interval, whereas
  \(\Delta t=4.5~\mathrm{fs}\) gives a rapid numerical instability near
  the start of the trajectory.  This identifies \(4.0~\mathrm{fs}\) as a
  stable practical time step for the present \(0\!-\!200~\mathrm{cm}^{-1}\)
  SO3LR FIMD setup, while \(4.5~\mathrm{fs}\) exceeds the stability
  threshold.
  }
  \label{fig:supp-timestep-stability}
\end{figure}
\section{Mathematical basis for physical fidelity of FIMD}

We consider a classical molecular system with Cartesian coordinates $\mathbf r\in\mathbb R^{3N}$,
momenta $\mathbf p\in\mathbb R^{3N}$, and mass matrix $M=\mathrm{diag}(m_1 I_3,\dots,m_N I_3)$.
The Hamiltonian is
\begin{equation}
H(\mathbf r,\mathbf p)=\tfrac12 \mathbf p^{\mathsf T}M^{-1}\mathbf p + V(\mathbf r),
\end{equation}
and the corresponding Liouville equation on phase-space densities $\rho(\mathbf r,\mathbf p,t)$ reads $\partial_t\rho = \hat{\mathcal L}^\dagger \rho$.

FIMD is formulated around a fixed reference geometry $\mathbf r_0$.
Let $H_0 := \nabla^2 V(\mathbf r_0)$ be the Cartesian Hessian and define the mass-weighted Hessian
$\widetilde H_0 := M^{-1/2}H_0M^{-1/2}$.
Diagonalising,
\begin{equation}
\widetilde H_0 = W\,\Omega^2 W^{\mathsf T},
\qquad
\Omega=\mathrm{diag}(\Omega_1,\dots,\Omega_{n^\nu}),
\end{equation}
yields an orthonormal mode matrix $W$ (restricted to the vibrational subspace, obtained by an existing trajectory) and harmonic frequencies $\{\Omega_\nu\}$.
Normal-mode coordinates $(\mathbf Q,\mathbf P)\in\mathbb R^{n^\nu}\times\mathbb R^{n^\nu}$ are defined by
\begin{equation}
\mathbf Q := W^{\mathsf T}M^{1/2}(\mathbf r-\mathbf r_0),
\qquad
\mathbf P := W^{\mathsf T}M^{-1/2}\mathbf p,
\end{equation}
so that the quadratic (harmonic) reference Hamiltonian is
\begin{equation}
H_0(\mathbf Q,\mathbf P)=\sum_{\nu=1}^{n^\nu}\left(\tfrac12 P_\nu^2+\tfrac12 \Omega_\nu^2 Q_\nu^2\right)+\mathrm{const}.
\end{equation}
The residual potential is
\begin{equation}
\Delta V(\mathbf r)=V(\mathbf r)-V(\mathbf r_0)-\tfrac12(\mathbf r-\mathbf r_0)^{\mathsf T}H_0(\mathbf r-\mathbf r_0),
\qquad
\mathbf F_\Delta(\mathbf r)=-\nabla \Delta V(\mathbf r).
\end{equation}

A band $B$ (a set of mode indices) defines a projector $P_B$ on the normal-mode space and a band-reconstruction
\begin{equation}
\mathbf r_B(\mathbf Q_B)=\mathbf r_0 + M^{-1/2}W_B\,\mathbf Q_B,
\end{equation}
where $W_B$ is the restriction of $W$ to columns in $B$ and $\mathbf Q_B=P_B\mathbf Q$.
Band-limited FIMD evolves $(\mathbf Q_B,\mathbf P_B)$ by composing:
(i) an \emph{exact} harmonic phase advance in the active band generated by $H_0$ restricted to $B$, and
(ii) momentum kicks from the projected residual force $P_B\bigl(W^{\mathsf T}M^{-1/2}\mathbf F_\Delta(\mathbf r_B)\bigr)$.
In NVT, an additional Ornstein--Uhlenbeck (OU) step is applied to $\mathbf P_B$.
The statements below formalise why these building blocks reproduce the expected harmonic geometry and canonical statistics.

\begin{theorem}[Reference harmonic drift torus]\label{thm:drift_torus}
Fix a reference configuration $\mathbf r_0$ with positive-definite vibrational Hessian $H_0$
and eigenpairs $\{(\Omega_\nu,w_\nu)\}_{\nu=1}^{n^\nu}$.
Consider the harmonic reference dynamics (equivalently, the FIMD drift substep with $S_\nu\equiv 0$)
\[
\dot q_\nu = p_\nu,\qquad \dot p_\nu = -\Omega_\nu^2 q_\nu,
\]
whose exact time-$\Delta t$ map in each $(q_\nu,p_\nu)$ plane is the rotation
\[
\binom{q_\nu^+}{p_\nu^+}=
\begin{pmatrix}
\cos(\Omega_\nu\Delta t) & \Omega_\nu^{-1}\sin(\Omega_\nu\Delta t)\\
-\Omega_\nu\sin(\Omega_\nu\Delta t) & \cos(\Omega_\nu\Delta t)
\end{pmatrix}
\binom{q_\nu}{p_\nu}.
\]
Then the quadratic action
\[
I_\nu := p_\nu^2+\Omega_\nu^2 q_\nu^2
\]
is invariant under this drift map for every $\nu$, so each level set $\{I_\nu=\text{const.}\}$
is diffeomorphic to a circle $S^1$ in the $(q_\nu,p_\nu/\Omega_\nu)$ plane. Consequently, for any
choice of constants $\{c_\nu>0\}_{\nu=1}^{n^\nu}$, the invariant set
\[
\mathcal T(\mathbf c):=\bigcap_{\nu=1}^{n^\nu}\{I_\nu=c_\nu\}
\]
is a product of circles,
\[
\mathcal T(\mathbf c)\;\cong\; \underbrace{S^1 \times \cdots \times S^1}_{n^v}
\]
\end{theorem}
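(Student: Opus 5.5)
The plan is a direct computation of $I_\nu$ along the rotation, followed by an elementary topological identification of the level sets and of their intersection. Fix a mode $\nu$ and abbreviate $c=\cos(\Omega_\nu\Delta t)$, $s=\sin(\Omega_\nu\Delta t)$. The key move is to pass to the rescaled coordinate $y_\nu:=p_\nu/\Omega_\nu$, which requires $\Omega_\nu>0$; this is guaranteed by the positive-definiteness assumption on the vibrational Hessian. In the variables $(q_\nu,y_\nu)$ the drift map becomes
\[
q_\nu^+ = c\,q_\nu + s\,y_\nu,\qquad y_\nu^+ = -s\,q_\nu + c\,y_\nu ,
\]
that is, an orthogonal rotation matrix in $SO(2)$. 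Since $I_\nu=\Omega_\nu^2(q_\nu^2+y_\nu^2)$ is $\Omega_\nu^2$ times the squared Euclidean norm, and rotations preserve the norm, $I_\nu^+=I_\nu$ follows immediately. As a cross-check, one can expand directly: the cross terms $\pm 2\Omega_\nu cs\,q_\nu p_\nu$ cancel, and $c^2+s^2=1$ leaves $p_\nu^2+\Omega_\nu^2q_\nu^2$.

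For the level-set claim, with $c_\nu>0$ the set $\{I_\nu=c_\nu\}$ is, in the $(q_\nu,y_\nu)$ plane, the circle of radius $\sqrt{c_\nu}/\Omega_\nu$. The linear map $(q_\nu,p_\nu)\mapsto(q_\nu,p_\nu/\Omega_\nu)$ is a diffeomorphism of $\mathbb R^2$, so the level set in the original $(q_\nu,p_\nu)$ plane is an ellipse diffeomorphic to $S^1$. An explicit diffeomorphism is $\theta\mapsto(\sqrt{c_\nu}\Omega_\nu^{-1}\cos\theta,\ -\sqrt{c_\nu}\sin\theta)$. This matches the phase convention used in the phase-MI section.

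For the torus, the key observation is that $I_\nu$ depends only on the $\nu$-th pair of coordinates, and the phase space is the product $\prod_\nu\mathbb R^2_{(q_\nu,p_\nu)}$. Hence $\mathcal T(\mathbf c)=\bigcap_\nu\{I_\nu=c_\nu\}$ equals the Cartesian product $\prod_\nu\{(q_\nu,p_\nu):I_\nu=c_\nu\}$. The product of the mode-wise diffeomorphisms then gives $\mathcal T(\mathbf c)\cong(S^1)^{n^\nu}$. Invariance of $\mathcal T(\mathbf c)$ under the full drift map follows because the map acts block-diagonally and each block preserves its own $I_\nu$; the same holds for the continuous-time flow, since $\tfrac{d}{dt}I_\nu=2p_\nu\dot p_\nu+2\Omega_\nu^2q_\nu\dot q_\nu=0$.

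The only step requiring care, rather than a genuine obstacle, is the identification of the intersection with a product. It relies on the modes decoupling exactly, which holds because $W$ is orthonormal and the harmonic reference Hamiltonian is diagonal in $(\mathbf Q,\mathbf P)$. It also relies on every $c_\nu$ being strictly positive; if some $c_\nu=0$ were allowed, that factor would collapse to a point and the torus would lose dimension. I would state both points explicitly.
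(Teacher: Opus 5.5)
Your proposal is correct and follows essentially the same route as the paper: conservation of $I_\nu$, identification of each level set as an ellipse in $(q_\nu,p_\nu)$ that becomes a circle in $(q_\nu,p_\nu/\Omega_\nu)$, and identification of $\mathcal T(\mathbf c)$ with the Cartesian product of the mode-wise circles. The only difference is that you check invariance directly on the discrete map, which is an $SO(2)$ rotation in the rescaled plane, whereas the paper computes $\tfrac{d}{dt}I_\nu=0$ along the continuous flow; these are equivalent because the drift map is the exact time-$\Delta t$ flow.
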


\begin{remark}[Full FIMD step perturbs the drift tori]\label{rem:torus_perturbed}
In full FIMD, the drift is composed with source kicks (and optionally thermostatting), so the
invariants $I_\nu$ are not conserved in general: the kicks move the state between the drift tori
$\mathcal T(\mathbf c)$. In the limit $S_\nu\to 0$ (and $\gamma\to 0$ when present), the method
reduces to the exact harmonic rotations above and preserves the invariant circles/tori.
\end{remark}

\begin{proof}
For a single mode $\nu=1$, the reference harmonic drift (i.e.\ the FIMD drift substep with $S_1\equiv 0$)
is the exact solution of
\[
\dot q_1 = p_1,\qquad \dot p_1 = -\Omega_1^2 q_1.
\]
Define the quadratic action
\[
I_1 := p_1^2+\Omega_1^2 q_1^2 .
\]
Differentiating along the flow gives
\[
\frac{dI_1}{dt}
= 2p_1\dot p_1 + 2\Omega_1^2 q_1 \dot q_1
= 2p_1(-\Omega_1^2 q_1) + 2\Omega_1^2 q_1(p_1)
=0,
\]
so $I_1$ is conserved. Hence the level set $\{I_1=c_1\}$ is an ellipse in the $(q_1,p_1)$ plane, which
becomes a circle in the scaled coordinates $(q_1,p_1/\Omega_1)$; therefore $\{I_1=c_1\}\cong S^1$.

Now let $n^\nu\ge 2$ and consider the decoupled reference dynamics
\[
\dot q_\nu = p_\nu,\qquad \dot p_\nu = -\Omega_\nu^2 q_\nu,\qquad \nu=1,\dots,n^\nu.
\]
For each $\nu$, the same calculation yields $\frac{d}{dt}I_\nu=0$ for
$I_\nu:=p_\nu^2+\Omega_\nu^2 q_\nu^2$. Therefore, for any constants $\mathbf c=(c_1,\dots,c_{n^\nu})$
with $c_\nu>0$, the invariant set
\[
\mathcal T(\mathbf c)=\bigcap_{\nu=1}^{n^\nu}\{I_\nu=c_\nu\}
\]
is the Cartesian product of the individual level sets:
\[
\mathcal T(\mathbf c)=\prod_{\nu=1}^{n^\nu}\{I_\nu=c_\nu\}.
\]
Since each factor $\{I_\nu=c_\nu\}\cong S^1$, it follows that
\[
\mathcal T(\mathbf c)\cong \underbrace{S^1 \times \cdots \times S^1}_{n^v},
\]
which is the claimed product-of-circles (torus) structure for the reference harmonic drift.
\end{proof}

\paragraph{Connection to the exact harmonic drift (phase advance).}
For a fixed reference $\mathbf r_0$, the quadratic Hamiltonian $H_0$ is diagonal in normal modes and each retained
mode evolves as an independent harmonic oscillator,
\[
\dot q_\nu = p_\nu,\qquad \dot p_\nu = -\Omega_\nu^2 q_\nu .
\]
Equivalently, in scaled coordinates $(q_\nu,\pi_\nu)$ with $\pi_\nu:=p_\nu/\Omega_\nu$, the drift is a rigid rotation
in the $(q_\nu,\pi_\nu)$ plane by angle $\Omega_\nu\Delta t$, i.e.
\begin{equation}
\binom{q_\nu^+}{p_\nu^+}=
\begin{pmatrix}
\cos(\Omega_\nu\Delta t) & \Omega_\nu^{-1}\sin(\Omega_\nu\Delta t)\\
-\Omega_\nu\sin(\Omega_\nu\Delta t) & \cos(\Omega_\nu\Delta t)
\end{pmatrix}
\binom{q_\nu}{p_\nu}.
\end{equation}
Thus the drift step is \emph{exact} and \emph{symplectic}, and it preserves the quadratic actions
$I_\nu=p_\nu^2+\Omega_\nu^2 q_\nu^2$ and the associated drift tori of Theorem~\ref{thm:drift_torus}.
This exact harmonic phase advance is the reference propagation used inside FIMD; consequently, departures from the
drift-torus geometry arise only from the residual-force kick (and from thermostatting, when present).

\begin{lemma}[Canonical stationarity and kinetic equipartition]\label{lem:canonical_stationarity}
Let $\hat{\mathcal L}$ denote the Liouville operator associated with the exact Hamiltonian flow generated by $H$,
and let $\hat{\mathcal L}^\dagger$ be its adjoint with respect to the Liouville measure $\mathrm d\Gamma$.

\medskip\noindent
\textup{(i)} The canonical density $\rho_\beta = Z^{-1}e^{-\beta H}$ is stationary under the exact Hamiltonian dynamics:
\begin{equation}
\hat{\mathcal L}^\dagger \rho_\beta = 0.
\end{equation}

\medskip\noindent
\textup{(ii)} Assume $H$ is separable with quadratic kinetic energy in the modal momenta,
\begin{equation}
H(q,p)=\sum_{\nu=1}^{n^\nu}\tfrac12\,p_\nu^2 + V(q) + \mathrm{const}.
\end{equation}
Then the momentum marginal factorises as
\begin{equation}
\rho_\beta^{(p)}(\{p_\nu\}) \propto \exp\!\left(-\beta\sum_{\nu=1}^{n^\nu}\tfrac12 p_\nu^2\right),
\end{equation}
so each $p_\nu$ is a one-dimensional Gaussian and
\begin{equation}
\bigl\langle \tfrac12 p_\nu^2 \bigr\rangle_{\rho_\beta} = \tfrac12 k_B T
\qquad \text{for each } \nu .
\end{equation}

\medskip\noindent
\textup{(iii)} If, in addition, $V$ is quadratic (harmonic reference),
$V(q)=\sum_{\nu=1}^{n^\nu}\tfrac12\Omega_\nu^2 q_\nu^2+\mathrm{const}$, then also
$\bigl\langle \tfrac12 \Omega_\nu^2 q_\nu^2 \bigr\rangle_{\rho_\beta} = \tfrac12 k_B T$.
\end{lemma}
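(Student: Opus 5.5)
The plan is to treat the three parts in order, using only the explicit form of the Liouville operator and Gaussian integrals.

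For (i), I would write the action of the generator on a density as $\hat{\mathcal L}^\dagger\rho=-\{\rho,H\}$, up to sign convention. This uses the fact that Hamiltonian vector fields are divergence-free, which is Liouville's theorem: $\nabla_{\mathbf r}\cdot(M^{-1}\mathbf p)+\nabla_{\mathbf p}\cdot\mathbf F(\mathbf r)=0$. With that, the adjoint with respect to $\mathrm d\Gamma$ is just minus the same transport operator. Since $\rho_\beta=Z^{-1}e^{-\beta H}$ is a function of $H$ alone, the chain rule gives $\{\rho_\beta,H\}=-\beta\rho_\beta\{H,H\}=0$ by antisymmetry of the Poisson bracket. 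This establishes stationarity. I would also note that $Z<\infty$ must be assumed, for example through confinement of $V$ or restriction to the vibrational subspace, so that $\rho_\beta$ is normalisable.

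For (ii), I would first confirm that the modal transformation $\mathbf Q=W^{\mathsf T}M^{1/2}(\mathbf r-\mathbf r_0)$, $\mathbf P=W^{\mathsf T}M^{-1/2}\mathbf p$ is canonical. Since $W$ is orthonormal, this is a linear point transformation whose momentum matrix is the inverse-transpose of the position matrix. Hence the Liouville measure $\mathrm d\Gamma$ is carried to $\mathrm dq\,\mathrm dp$ up to a constant Jacobian, which cancels in $Z$.

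Next, separability gives $e^{-\beta H}=e^{-\beta\sum_\nu p_\nu^2/2}\,e^{-\beta V(q)}$. Integrating out $q$ contributes only a constant, so the momentum marginal is proportional to $\prod_\nu e^{-\beta p_\nu^2/2}$. This marginal is a product of independent centred Gaussians with variance $\beta^{-1}=k_BT$. Therefore $\langle\tfrac12 p_\nu^2\rangle=\tfrac12 k_BT$. The same identity could be obtained without Gaussian algebra by integrating $p_\nu\,\partial_{p_\nu}e^{-\beta H}$ by parts, which is the generalized equipartition theorem.

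For (iii), the configurational factor also factorises as $\prod_\nu e^{-\beta\Omega_\nu^2q_\nu^2/2}$. Each $q_\nu$ is then Gaussian with variance $(\beta\Omega_\nu^2)^{-1}$, giving $\langle\tfrac12\Omega_\nu^2q_\nu^2\rangle=\tfrac12k_BT$. Here the positive-definiteness assumed in Theorem~\ref{thm:drift_torus}, namely $\Omega_\nu>0$, is what guarantees integrability.

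The only delicate step is the first one: getting the sign and adjoint convention right and making explicit the normalisability assumption for $Z$. The unconfined zero modes (translations and rotations) are already removed by restricting to the $n^\nu$ vibrational coordinates, so they do not obstruct normalisation. Everything else is routine Gaussian integration.
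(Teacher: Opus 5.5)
Your proof is correct. Parts (i) and (ii) follow the paper essentially verbatim. The paper also writes $\hat{\mathcal L}f=\{f,H\}$, obtains $\hat{\mathcal L}^\dagger\rho=-\{\rho,H\}$ by integration by parts, and kills $\{\varphi(H),H\}=\varphi'(H)\{H,H\}$. It then integrates out $q$ (and, in its band-restricted version, the inactive momenta $p_{B^c}$) from the factorised Boltzmann weight.

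You genuinely differ in (iii). The paper does not integrate the configurational Gaussian directly. Motivated by the drift-torus theorem, it passes to action--angle variables $I_\nu=p_\nu^2+\Omega_\nu^2q_\nu^2$, $\theta_\nu$, and computes the Jacobian $\mathrm dq_\nu\,\mathrm dp_\nu=(2\Omega_\nu)^{-1}\mathrm dI_\nu\,\mathrm d\theta_\nu$. It observes that the weight $e^{-\beta I_\nu/2}$ makes $\theta_\nu$ uniform and independent of $I_\nu$, with $\langle I_\nu\rangle=2/\beta$. It then concludes via $\langle\cos^2\theta_\nu\rangle=\tfrac12$.

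Your direct factorisation into Gaussians of variance $(\beta\Omega_\nu^2)^{-1}$ is shorter and needs no change of coordinates or appeal to the torus theorem. It also makes the role of $\Omega_\nu>0$ (integrability) explicit. The paper's route costs more, but it yields the uniform, factorised phase law as a by-product. That ties the result to the drift-torus geometry and to the factorised-phase picture behind the later phase mutual-information diagnostic.

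Your generalized-equipartition remark ($\langle x\,\partial_x H\rangle=k_BT$ by integration by parts) is more general than either Gaussian computation. It also gives $\langle q_\nu\partial_{q_\nu}V\rangle=k_BT$ for anharmonic $V$.

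You also make explicit two things the paper leaves implicit. First, $Z<\infty$ must be assumed; the paper simply calls the bracketed integrals ``finite constants''. Second, the modal variables must be canonical; the paper asserts this ``by the construction of FIMD''. Note that your inverse-transpose argument literally requires $W$ to be completed to a square orthogonal matrix, e.g.\ with the linearised translation/rotation directions. This is harmless here, because the lemma takes the modal form of $H$ as a hypothesis.
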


\begin{proof}
\emph{(i) Stationarity of the canonical density.}
By construction, the exact Fourier--Liouville operator $\hat{\mathcal L}$ coincides with the classical
Hamiltonian Liouville operator written in the chosen canonical variables. Acting on a smooth observable
$f\colon\mathcal{P}\to\mathbb{R}$, it can be written in Poisson--bracket form
\begin{equation}
  \hat{\mathcal L} f
  = \{f,H\}
  := \sum_{i=1}^{3N}
    \left(
      \frac{\partial f}{\partial r_i}\frac{\partial H}{\partial p_i}
      - \frac{\partial f}{\partial p_i}\frac{\partial H}{\partial r_i}
    \right).
\end{equation}
The adjoint $\hat{\mathcal L}^\dagger$ with respect to the Lebesgue (Liouville) measure
$\mathrm{d}\Gamma = \mathrm{d}\mathbf{r}\,\mathrm{d}\mathbf{p}$ is defined by
\begin{equation}
  \int_{\mathcal{P}} (\hat{\mathcal L} f)\,\rho\,\mathrm{d}\Gamma
  = \int_{\mathcal{P}} f\,(\hat{\mathcal L}^\dagger \rho)\,\mathrm{d}\Gamma
\end{equation}
for all smooth compactly supported $f$.
A standard integration--by--parts calculation, and the antisymmetry of the Poisson bracket give
\begin{equation}
  \hat{\mathcal L}^\dagger \rho
  = -\{\rho,H\}.
\end{equation}
In particular, if $\rho$ is a smooth function of $H$ only, $\rho = \varphi(H)$ for some
$\varphi\in C^1(\mathbb{R})$, then the chain rule yields
\begin{equation}
  \{\varphi(H),H\}
  = \varphi'(H)\,\{H,H\}
  = 0,
\end{equation}
since $\{H,H\}=0$ identically. Choosing $\varphi(x)=Z^{-1}e^{-\beta x}$ gives $\rho_\beta$, hence
\begin{equation}
  \hat{\mathcal L}^\dagger \rho_\beta
  = -\{\rho_\beta,H\}
  = 0,
\end{equation}
which proves stationarity of the canonical density.

\medskip\noindent
\emph{(ii) Active-band momentum marginal and kinetic equipartition.}
One can assume that the Hamiltonian is in the canonical modal variables $(q,p)$ with quadratic kinetic energy,
\begin{equation}
H(q,p)=\sum_{\nu=1}^{n^\nu}\tfrac12 p_\nu^2 + V(q) + \mathrm{const.}
\end{equation}
by the construction of FIMD.
Let $B\subset\{1,\dots,n^\nu\}$ denote an active band and write $B^c$ for its complement.
With $\mathrm d\Gamma=\prod_{\nu=1}^{n^\nu}\mathrm dq_\nu\,\mathrm dp_\nu$, the canonical density is
\begin{equation}
  \rho_\beta(q,p)
  = Z^{-1}\exp\!\left(-\beta\sum_{\nu=1}^{n^\nu}\tfrac12 p_\nu^2\right)\exp\!\left(-\beta V(q)\right).
\end{equation}

Define the active-band momentum marginal by integrating out \emph{all} coordinates and the inactive
momenta $\{p_\mu\}_{\mu\in B^c}$:
\begin{equation}
  \rho_{\beta,B}^{(p)}(p_B)
  := \int_{\mathbb R^{n^\nu}}\int_{\mathbb R^{|B^c|}}
      \rho_\beta(q,p)\,
      \prod_{\nu=1}^{n^\nu}\mathrm dq_\nu\,
      \prod_{\mu\in B^c}\mathrm dp_\mu,
  \qquad p_B:=\{p_\nu\}_{\nu\in B}.
\end{equation}
Using the factorisation
\[
\exp\!\left(-\beta\sum_{\nu=1}^{n^\nu}\tfrac12 p_\nu^2\right)
=
\exp\!\left(-\beta\sum_{\nu\in B}\tfrac12 p_\nu^2\right)\,
\exp\!\left(-\beta\sum_{\mu\in B^c}\tfrac12 p_\mu^2\right),
\]
we obtain
\begin{align}
  \rho_{\beta,B}^{(p)}(p_B)
  &= Z^{-1}\exp\!\left(-\beta\sum_{\nu\in B}\tfrac12 p_\nu^2\right)
     \left[\int_{\mathbb R^{n^\nu}}e^{-\beta V(q)}\prod_{\nu=1}^{n^\nu}\mathrm dq_\nu\right]
     \left[\int_{\mathbb R^{|B^c|}}\exp\!\left(-\beta\sum_{\mu\in B^c}\tfrac12 p_\mu^2\right)
           \prod_{\mu\in B^c}\mathrm dp_\mu\right].
\end{align}
Both bracketed integrals are finite constants independent of $p_B$ (and can be absorbed into the
normalisation). Hence
\begin{equation}
  \rho_{\beta,B}^{(p)}(p_B)
  \propto \exp\!\left(-\beta\sum_{\nu\in B}\tfrac12 p_\nu^2\right)
  = \prod_{\nu\in B}\exp\!\left(-\beta\tfrac12 p_\nu^2\right),
\end{equation}
so the active-band momenta are independent one-dimensional Gaussians:
\begin{equation}
  \rho_{\beta,B}^{(p)}(p_B)=\prod_{\nu\in B}\rho_{\beta}(p_\nu),\qquad
  \rho_{\beta}(p)=C\,\exp\!\left(-\beta\tfrac12 p^2\right),
\end{equation}
with the same constant $C$ for all $\nu\in B$.

Finally, for any $\nu\in B$,
\begin{equation}
  \left\langle \tfrac12 p_\nu^2 \right\rangle_{\rho_\beta}
  = \int_{\mathbb R}\tfrac12 p^2\,\rho_{\beta}(p)\,\mathrm dp
  = \frac{1}{2\beta}
  = \tfrac12 k_BT,
\end{equation}
which is the mode-by-mode kinetic equipartition on the active band.

\medskip\noindent
\emph{(iii) Coordinate equipartition in the harmonic reference.}
Assume the harmonic reference Hamiltonian
\[
H_0(q,p)=\sum_{\nu=1}^{n^\nu}\Big(\tfrac12 p_\nu^2+\tfrac12\Omega_\nu^2 q_\nu^2\Big)+\mathrm{const.}
\]
and consider a fixed mode $\nu$.  Introduce the drift invariant (Theorem~\ref{thm:drift_torus})
\[
I_\nu:=p_\nu^2+\Omega_\nu^2 q_\nu^2.
\]
By Theorem~\ref{thm:drift_torus}, each level set $\{I_\nu=c\}$ is a circle, and the full invariant set
$\bigcap_{\nu=1}^{n^\nu}\{I_\nu=c_\nu\}$ is a product of circles.  This motivates polar (angle) coordinates on
each modal plane.  Define the scaled momentum $\pi_\nu:=p_\nu/\Omega_\nu$, so that
$I_\nu=\Omega_\nu^2(q_\nu^2+\pi_\nu^2)$, and write
\[
q_\nu=\rho_\nu\cos\theta_\nu,\qquad \pi_\nu=\rho_\nu\sin\theta_\nu,
\qquad \rho_\nu\ge 0,\ \theta_\nu\in[0,2\pi).
\]
Equivalently, $I_\nu=\Omega_\nu^2\rho_\nu^2$, and $(\rho_\nu,\theta_\nu)$ parameterises the circles from
Theorem~\ref{thm:drift_torus}.  The Liouville measure in this mode becomes
\[
\mathrm dq_\nu\,\mathrm dp_\nu
= \Omega_\nu\,\mathrm dq_\nu\,\mathrm d\pi_\nu
= \Omega_\nu\,(\rho_\nu\,\mathrm d\rho_\nu\,\mathrm d\theta_\nu)
= \frac{1}{2\Omega_\nu}\,\mathrm dI_\nu\,\mathrm d\theta_\nu,
\]
since $\rho_\nu\,\mathrm d\rho_\nu=\mathrm dI_\nu/(2\Omega_\nu^2)$.

Because $H_0$ is a sum of modes and, for this mode,
$\tfrac12(p_\nu^2+\Omega_\nu^2 q_\nu^2)=\tfrac12 I_\nu$, the canonical density factor for $(I_\nu,\theta_\nu)$ is
\[
\propto \exp\!\big(-\beta\tfrac12 I_\nu\big),
\]
which is independent of $\theta_\nu$.  Hence $\theta_\nu$ is uniform on $[0,2\pi)$ under $\rho_\beta$, and
\[
\Omega_\nu^2 q_\nu^2 = \Omega_\nu^2\rho_\nu^2\cos^2\theta_\nu = I_\nu\cos^2\theta_\nu.
\]
Taking the canonical average and integrating out the uniform angle gives
\[
\big\langle \Omega_\nu^2 q_\nu^2 \big\rangle_{\rho_\beta}
= \big\langle I_\nu \big\rangle_{\rho_\beta}\,
\big\langle \cos^2\theta_\nu \big\rangle
= \big\langle I_\nu \big\rangle_{\rho_\beta}\cdot \frac12.
\]
It remains to compute $\langle I_\nu\rangle$.  Since the $(I_\nu,\theta_\nu)$ density is proportional to
$e^{-\beta I_\nu/2}$ and the Jacobian contributes only the constant factor $(2\Omega_\nu)^{-1}$,
the marginal of $I_\nu$ is an exponential distribution on $[0,\infty)$ with mean
\[
\big\langle I_\nu \big\rangle_{\rho_\beta}=\frac{2}{\beta}.
\]
Therefore,
\[
\left\langle \tfrac12\Omega_\nu^2 q_\nu^2\right\rangle_{\rho_\beta}
= \tfrac12\cdot \tfrac12\,\big\langle I_\nu \big\rangle_{\rho_\beta}
= \frac{1}{2\beta}
= \tfrac12 k_BT,
\]
which is the coordinate equipartition statement mode-by-mode.

\end{proof}

Lemma~\ref{lem:canonical_stationarity} states two foundational facts:
\begin{itemize}
    \item[(i)] for \emph{exact} Hamiltonian flow, the canonical density $\rho_\beta\propto e^{-\beta H}$ is stationary; and
    \item[(ii)] for any separable Hamiltonian with quadratic kinetic energy, the momentum marginal (and, in particular, any active-band marginal) factorises into independent Gaussians and obeys mode-by-mode kinetic equipartition.
\end{itemize}
These two properties are the statistical targets that an NVT integrator must preserve.

FIMD intentionally evolves only an active subset of modes.
Given a band $B$, we split normal-mode variables into active and inactive components,
$(\mathbf Q,\mathbf P)=(\mathbf Q_B,\mathbf Q_{B^c},\mathbf P_B,\mathbf P_{B^c})$.
The natural canonical object on the active subspace is the band marginal
\begin{equation}
\rho_{\beta,B}(\mathbf Q_B,\mathbf P_B)
:=
\int \rho_\beta(\mathbf Q_B,\mathbf Q_{B^c},\mathbf P_B,\mathbf P_{B^c})\,
\mathrm d\mathbf Q_{B^c}\,\mathrm d\mathbf P_{B^c}.
\end{equation}
For any separable Hamiltonian with quadratic kinetic energy, the active-band momentum marginal
factorises into independent Gaussians for $\nu\in B$.
Accordingly, the correct band temperature diagnostic is
\begin{equation}
T_B(t) := \frac{2}{|B|k_B}\,\mathrm{KE}_B(t)
\quad\text{with}\quad
\mathrm{KE}_B(t)=\sum_{\nu\in B}\tfrac12 P_\nu(t)^2,
\end{equation}
and the expected equilibrium value is $\langle T_B\rangle = T$ by kinetic equipartition.
Theorem~\ref{thm:band_stationarity} makes these statements precise at the level of the band Liouville operator.

\begin{theorem}[Band-canonical stationarity and band kinetic equipartition]\label{thm:band_stationarity}
Let $B\subset\{1,\dots,n^\nu\}$ be an active band and let $\rho_{\beta,B}$ denote the canonical band marginal
\[
\rho_{\beta,B}(q_B,p_B):=\int \rho_\beta(q_B,q_{B^c},p_B,p_{B^c})\,\mathrm dq_{B^c}\,\mathrm dp_{B^c}.
\]

\medskip\noindent
\textup{(1) (Reference drift) } Let $\hat{\mathcal L}_{0,B}$ be the Liouville operator for the \emph{reference harmonic drift}
restricted to $B$, i.e.\ the Hamiltonian flow generated by
\[
H_{0,B}(q_B,p_B)=\sum_{\nu\in B}\Big(\tfrac12 p_\nu^2+\tfrac12\Omega_\nu^2 q_\nu^2\Big).
\]
Then the band-canonical density $\rho_{\beta,B}^{(0)}\propto e^{-\beta H_{0,B}}$ is stationary:
\[
\hat{\mathcal L}_{0,B}^\dagger \rho_{\beta,B}^{(0)}=0.
\]

\medskip\noindent
\textup{(2) (Kinetic equipartition on the band) } For any Hamiltonian of the form
$H(q,p)=\sum_{\nu=1}^{n^\nu}\tfrac12 p_\nu^2+V(q)+\mathrm{const}$,
the band momentum marginal factorises into independent Gaussians, and
\[
\Bigl\langle \mathrm{KE}_B \Bigr\rangle_{\rho_{\beta,B}}
=\Bigl\langle \sum_{\nu\in B}\tfrac12 p_\nu^2 \Bigr\rangle_{\rho_{\beta,B}}
=\frac{|B|}{2}\,k_B T,
\qquad
\Bigl\langle \tfrac12 p_\nu^2 \Bigr\rangle_{\rho_{\beta,B}}=\tfrac12 k_B T\ \ (\nu\in B).
\]
\end{theorem}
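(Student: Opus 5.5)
Both parts reduce to results already established in Lemma~\ref{lem:canonical_stationarity}. In each case the work is to apply the lemma to the right Hamiltonian and the right measure.

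For part (1), I would treat $H_{0,B}$ as a Hamiltonian in its own right on the band phase space $\mathbb R^{|B|}\times\mathbb R^{|B|}$, equipped with the Liouville measure $\mathrm d\Gamma_B=\prod_{\nu\in B}\mathrm dq_\nu\,\mathrm dp_\nu$. The generator then has Poisson-bracket form, $\hat{\mathcal L}_{0,B}f=\{f,H_{0,B}\}_B$, where the bracket sums only over $\nu\in B$.

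The argument of Lemma~\ref{lem:canonical_stationarity}(i) carries over verbatim. Integrating by parts against compactly supported $f$ gives $\hat{\mathcal L}_{0,B}^\dagger\rho=-\{\rho,H_{0,B}\}_B$. No boundary terms appear, because the Hamiltonian vector field $(p_\nu,-\Omega_\nu^2q_\nu)$ is divergence-free and $f$ has compact support. Taking $\rho=\varphi(H_{0,B})$ with $\varphi(x)\propto e^{-\beta x}$, the chain rule gives $\{\varphi(H_{0,B}),H_{0,B}\}_B=\varphi'(H_{0,B})\{H_{0,B},H_{0,B}\}_B=0$.

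Normalisability is immediate here, since $\Omega_\nu>0$ makes $e^{-\beta H_{0,B}}$ a product of Gaussians. As a cross-check, the exact drift map is a product of planar maps with determinant $\cos^2+\sin^2=1$ that preserve $H_{0,B}$, by Theorem~\ref{thm:drift_torus}. It therefore preserves both the measure and the density, which confirms stationarity at the level of the discrete map as well.

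For part (2), I would first observe that $\rho_{\beta,B}$ is a marginal of $\rho_\beta$. Any band observable $g(p_B)$ therefore satisfies $\langle g\rangle_{\rho_{\beta,B}}=\langle g\rangle_{\rho_\beta}$ by Fubini. The integrability needed for Fubini is supplied by the assumption, implicit in defining $Z$, that $\int e^{-\beta V}\,\mathrm dq<\infty$.

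The computation is then the one in Lemma~\ref{lem:canonical_stationarity}(ii). The factor $e^{-\beta V(q)}$ integrates out to a constant, and so do the Gaussian factors in $p_{B^c}$. What remains is $\rho^{(p)}_{\beta,B}(p_B)\propto\prod_{\nu\in B}e^{-\beta p_\nu^2/2}$, a product of independent Gaussians of variance $k_BT$. This gives $\langle\tfrac12p_\nu^2\rangle=\tfrac12k_BT$ for each $\nu\in B$, and summing over the $|B|$ band modes by linearity gives $\langle\mathrm{KE}_B\rangle=\tfrac{|B|}{2}k_BT$.

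The main obstacle is one of interpretation rather than computation. In part (1), the stationary density is $e^{-\beta H_{0,B}}$, not the marginal $\rho_{\beta,B}$ of the full anharmonic $\rho_\beta$; the two coincide only when $V$ is harmonic and separable. I would state this explicitly. In part (2), I would stress that equipartition refers to the exact canonical marginal and does not depend on the form of $V$. This is precisely why the statement separates the two parts.
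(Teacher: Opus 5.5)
Your proof is correct. Part (2) is essentially the paper's argument: factorise the Gaussian momentum weight, integrate out $q$ and $p_{B^c}$ as constants, read off $\langle\tfrac12 p_\nu^2\rangle=\tfrac12 k_BT$, and sum over $B$. Part (1), however, takes a genuinely different route.

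You apply Lemma~\ref{lem:canonical_stationarity}(i) directly on the band phase space, treating $H_{0,B}$ as a Hamiltonian in its own right. This gives $\hat{\mathcal L}_{0,B}^\dagger e^{-\beta H_{0,B}}=-\{e^{-\beta H_{0,B}},H_{0,B}\}_B=0$ at once.

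The paper instead starts from stationarity of the full harmonic canonical density $\rho^{(0)}_\beta\propto e^{-\beta H_0}$ on the whole phase space. It splits $\hat{\mathcal L}_0=\hat{\mathcal L}_{0,B}+\hat{\mathcal L}_{0,B^c}$ and tests against functions $\widetilde\varphi$ of $(q_B,p_B)$ alone, which $\hat{\mathcal L}_{0,B^c}$ annihilates. Integrating out $B^c$ then shows that the marginal $\int\rho^{(0)}_\beta\,\mathrm d\Gamma_{B^c}$ is stationary under $\hat{\mathcal L}_{0,B}$.

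Your version is shorter and lands exactly on the density named in the statement. The paper's version needs, and leaves implicit, the identification of that marginal with $e^{-\beta H_{0,B}}$, which holds because $H_0$ is separable. Your cross-check (unit-determinant rotations that preserve $H_{0,B}$) also gives invariance for the discrete drift substep itself, not just for its generator.

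What the paper's route buys is the conceptual link the authors want. It shows that restricting the harmonic dynamics to $B$ is compatible with marginalising the full-system canonical law, so band truncation is a reduction of phase space rather than a change of ensemble.

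Your closing caveat, that $e^{-\beta H_{0,B}}$ generally differs from the marginal of the anharmonic $\rho_\beta$, is apt. However, ``only when $V$ is harmonic and separable'' overstates it. That condition is sufficient but not necessary, since all that is required is that the $B$-marginal of $e^{-\beta V}$ come out as the harmonic reference.
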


\begin{proof}
\emph{(1) Stationarity.}
Let $H_{0}$ denote the harmonic reference Hamiltonian written in normal-mode coordinates,
\[
H_{0}(q,p)=\sum_{\nu=1}^{n^\nu}\Big(\tfrac12 p_\nu^2+\tfrac12\Omega_\nu^2 q_\nu^2\Big)+\mathrm{const.},
\]
and let $\hat{\mathcal L}_{0}$ be its Liouville operator. By Lemma~\ref{lem:canonical_stationarity}(i),
the corresponding canonical density $\rho_{\beta}^{(0)}:=Z_0^{-1}e^{-\beta H_{0}}$ satisfies
\begin{equation}
  \hat{\mathcal L}_{0}^\dagger \rho_{\beta}^{(0)} = 0.
\end{equation}
Because $H_{0}$ is a sum of independent modal Hamiltonians, $\hat{\mathcal L}_{0}$ decomposes as a sum of
commuting modal generators and hence splits over the band:
\begin{equation}
  \hat{\mathcal L}_{0}
  =
  \sum_{\nu=1}^{n^\nu}\hat{\mathcal L}_{0,\nu}
  =
  \hat{\mathcal L}_{0,B}+\hat{\mathcal L}_{0,B^c},
\end{equation}
where $\hat{\mathcal L}_{0,B}:=\sum_{\nu\in B}\hat{\mathcal L}_{0,\nu}$ and similarly for $B^c$.

Let $\varphi\in C_c^\infty(\mathcal P_B)$ be a smooth compactly supported test function on the band phase
space $\mathcal P_B$, and define its trivial extension
\[
\widetilde{\varphi}(q,p):=\varphi(q_B,p_B),
\]
independent of $(q_{B^c},p_{B^c})$. Then $\hat{\mathcal L}_{0,B^c}\widetilde{\varphi}=0$, and therefore
\begin{align}
0
&= \int_{\mathcal P} (\hat{\mathcal L}_{0}\widetilde{\varphi})\,\rho_{\beta}^{(0)}\,\mathrm d\Gamma
 = \int_{\mathcal P} (\hat{\mathcal L}_{0,B}\widetilde{\varphi})\,\rho_{\beta}^{(0)}\,\mathrm d\Gamma \\
&= \int_{\mathcal P_B} (\hat{\mathcal L}_{0,B}\varphi)(q_B,p_B)
   \left[\int_{\mathcal P_{B^c}}\rho_{\beta}^{(0)}(q,p)\,\mathrm d\Gamma_{B^c}\right]
   \mathrm d\Gamma_B.
\end{align}
Define the band canonical density for the reference system as the marginal
\[
\rho_{\beta,B}^{(0)}(q_B,p_B):=\int_{\mathcal P_{B^c}}\rho_{\beta}^{(0)}(q,p)\,\mathrm d\Gamma_{B^c}.
\]
Then the previous identity becomes
\[
\int_{\mathcal P_B} (\hat{\mathcal L}_{0,B}\varphi)\,\rho_{\beta,B}^{(0)}\,\mathrm d\Gamma_B=0
\qquad \text{for all } \varphi\in C_c^\infty(\mathcal P_B).
\]
By the definition of the adjoint $\hat{\mathcal L}_{0,B}^\dagger$ on $\mathcal P_B$, this implies
\[
\hat{\mathcal L}_{0,B}^\dagger \rho_{\beta,B}^{(0)}=0,
\]
i.e.\ the band-canonical density is stationary under the band-restricted (reference harmonic) dynamics.

\medskip\noindent
\emph{(2) Equipartition in the band.}
Assume the full Hamiltonian has quadratic kinetic energy in the modal momenta,
\[
H(q,p)=\sum_{\nu=1}^{n^\nu}\tfrac12 p_\nu^2 + V(q)+\mathrm{const.}
\]
Then, exactly as in Lemma~\ref{lem:canonical_stationarity}(ii), the momentum part of the canonical density
factorises as $\exp(-\beta\sum_\nu \tfrac12 p_\nu^2)$, irrespective of $V(q)$.
Define the active-band momentum marginal by integrating out all coordinates and inactive momenta:
\[
\rho_{\beta,B}^{(p)}(p_B)
:=
\int_{\mathbb R^{n^\nu}}\int_{\mathbb R^{|B^c|}}
\rho_\beta(q,p)\,
\prod_{\nu=1}^{n^\nu}\mathrm dq_\nu\,
\prod_{\mu\in B^c}\mathrm dp_\mu .
\]
Using the factorisation
\[
\exp\!\left(-\beta\sum_{\nu=1}^{n^\nu}\tfrac12 p_\nu^2\right)
=
\exp\!\left(-\beta\sum_{\nu\in B}\tfrac12 p_\nu^2\right)\,
\exp\!\left(-\beta\sum_{\mu\in B^c}\tfrac12 p_\mu^2\right),
\]
the integrations over $q$ and $p_{B^c}$ contribute only constants independent of $p_B$, hence
\[
\rho_{\beta,B}^{(p)}(p_B)\propto
\exp\!\left(-\beta\sum_{\nu\in B}\tfrac12 p_\nu^2\right)
=
\prod_{\nu\in B}\exp\!\left(-\beta\tfrac12 p_\nu^2\right).
\]
Therefore $\{p_\nu\}_{\nu\in B}$ are independent one-dimensional Gaussians, and for each $\nu\in B$,
\[
\Bigl\langle \tfrac12 p_\nu^2 \Bigr\rangle_{\rho_{\beta,B}}
=
\frac{1}{2\beta}
=\tfrac12 k_BT.
\]
Summing over $\nu\in B$ yields
\[
\Bigl\langle \mathrm{KE}_B \Bigr\rangle_{\rho_{\beta,B}}
=
\Bigl\langle \sum_{\nu\in B}\tfrac12 p_\nu^2 \Bigr\rangle_{\rho_{\beta,B}}
=
\frac{|B|}{2}\,k_BT,
\]
which is the claimed band equipartition.
\end{proof}

\paragraph{Why a band thermostat is enough (and what it guarantees).}
Band-limited FIMD evolves only the active normal-mode variables $(\mathbf Q_B,\mathbf P_B)$, while the inactive
coordinates are not propagated explicitly. In an NVT setting, the thermostat therefore cannot (and need not) enforce
the canonical law on the full $(\mathbf Q,\mathbf P)$; instead, the appropriate statistical target is the \emph{band
marginal} on the variables we actually propagate,
\[
\rho_{\beta,B}(\mathbf Q_B,\mathbf P_B)
:=
\int \rho_\beta(\mathbf Q_B,\mathbf Q_{B^c},\mathbf P_B,\mathbf P_{B^c})\,
\mathrm d\mathbf Q_{B^c}\,\mathrm d\mathbf P_{B^c}.
\]
The key point is that the canonical distribution imposes a universal constraint on momenta that does \emph{not}
depend on the detailed form of the potential: for any Hamiltonian with quadratic kinetic energy in the canonical
momenta,
\[
H(\mathbf Q,\mathbf P)=\sum_{\nu}\tfrac12 P_\nu^2 + V(\mathbf Q) + \mathrm{const.},
\]
the momentum factor in $\rho_\beta\propto e^{-\beta H}$ is
$\exp(-\beta\sum_\nu \tfrac12 P_\nu^2)$, and hence the \emph{band momentum marginal} is
\[
\rho^{(P)}_{\beta,B}(\mathbf P_B)\ \propto\ \exp\!\left(-\beta\sum_{\nu\in B}\tfrac12 P_\nu^2\right),
\]
i.e.\ independent one-dimensional Gaussians for each $\nu\in B$ with
$\langle \tfrac12 P_\nu^2\rangle = \tfrac12 k_B T$ and
$\langle \mathrm{KE}_B\rangle = \sum_{\nu\in B}\langle \tfrac12 P_\nu^2\rangle = \tfrac{|B|}{2}k_B T$.
This is the precise sense in which a thermostat acting only on $\mathbf P_B$ is sufficient for controlling the
band temperature.

Theorem~\ref{thm:band_stationarity} plays a complementary role for the \emph{reference drift} used inside FIMD.
For the harmonic reference Hamiltonian on the band,
\[
H_{0,B}(\mathbf Q_B,\mathbf P_B)=\sum_{\nu\in B}\Big(\tfrac12 P_\nu^2+\tfrac12\Omega_\nu^2 Q_\nu^2\Big),
\]
the band-canonical density $\propto e^{-\beta H_{0,B}}$ is stationary under the band-restricted Hamiltonian flow, and the drift map is an exact symplectic phase rotation (Theorem~\ref{thm:drift_torus}). 
Thus the drift step introduces no sampling bias by itself. In a general anharmonic system, deviations from the reference drift arise only through the residual-force kick, which couples the active subspace to eliminated coordinates implicitly; the band thermostat is then used to (i) prevent systematic heating/cooling of the propagated modes and (ii) restore the correct Gaussian statistics for $\mathbf P_B$, while leaving the exact drift rotation untouched.

Operationally, this yields two concrete NVT diagnostics that we can test directly:
\begin{enumerate}
\item \emph{Band kinetic equipartition:}
\[
\left\langle \tfrac12 P_\nu^2\right\rangle = \tfrac12 k_B T\quad(\nu\in B),
\qquad
\left\langle \mathrm{KE}_B\right\rangle = \frac{|B|}{2}k_B T,
\]
equivalently $\langle T_B\rangle = T$ for $T_B(t):=\frac{2}{|B|k_B}\mathrm{KE}_B(t)$.

\item \emph{Gaussian (Maxwell--Boltzmann) statistics of reconstructed velocities:} the band-limited Cartesian velocity field is a linear reconstruction from the band momenta,
\[
\dot{\mathbf r}_B(t)
=\sum_{\nu\in B}\dot Q_\nu(t)\, \mathbf w_\nu
=\sum_{\nu\in B} P_\nu(t)\,\mathbf u_\nu,
\]
for fixed reconstruction vectors $\mathbf u_\nu$ (mass-weight factors absorbed into $\mathbf u_\nu$).
Since linear combinations of independent Gaussians are Gaussian, the canonical law for $\mathbf P_B$ implies that each Cartesian component of $\dot{\mathbf r}_B$ is Gaussian. 
In practice we therefore test that the normalised velocity components have an approximately standard normal distribution.
\end{enumerate}
We emphasise that these guarantees concern the \emph{band momentum law} (and thus the band kinetic temperature).
In a truncated, strongly anharmonic system this does not, by itself, imply exact canonical sampling of the band \emph{coordinates} $\mathbf Q_B$; rather, it provides the correct and testable NVT control on the degrees of freedom that are explicitly propagated by band-limited FIMD.

\begin{theorem}[Band Langevin invariance and Gaussian velocity law]\label{thm:band_langevin_MB}
Let $B$ be an active band. Consider a band Hamiltonian on $(q_B,p_B)$ of the separable form
\begin{equation}
H_B(q_B,p_B)=\sum_{\nu\in B}\tfrac12 p_\nu^2 + U_B(q_B),
\end{equation}
with associated Hamiltonian Liouville operator $\hat{\mathcal L}_{H_B}$.
Let $\hat{\mathcal L}_{\gamma,B}$ be the Ornstein--Uhlenbeck generator acting only on the band momenta,
\begin{equation}
\hat{\mathcal L}_{\gamma,B}
=
\sum_{\nu\in B}
\left(
-\gamma p_\nu\,\partial_{p_\nu}
+\gamma k_B T\,\partial_{p_\nu}^2
\right).
\end{equation}
Then the band canonical density
\begin{equation}
\rho_{\beta,B}^{(B)}(q_B,p_B) = Z_B^{-1}\exp\!\bigl(-\beta H_B(q_B,p_B)\bigr)
\end{equation}
is invariant under the band Langevin generator:
\begin{equation}
(\hat{\mathcal L}_{H_B}+\hat{\mathcal L}_{\gamma,B})^\dagger \rho_{\beta,B}^{(B)}=0.
\end{equation}
In particular, the band momenta remain independent Gaussians with variance $k_BT$, and
$\langle \tfrac12 p_\nu^2\rangle=\tfrac12 k_BT$ for all $\nu\in B$.

Moreover, any linear reconstruction of velocities from $p_B$ (e.g.\ the band-limited Cartesian velocity
$\dot{\mathbf r}_B=\sum_{\nu\in B} p_\nu\,\mathbf u_\nu$) is multivariate Gaussian. If $B$ spans the full
vibrational subspace used in the reconstruction (no projection), then each Cartesian component satisfies
$v_i\sim\mathcal N(0,k_BT/m_i)$ and $\langle\tfrac12 m_i v_i^2\rangle=\tfrac12 k_BT$ (Maxwell--Boltzmann).

Finally, the exact discrete OU update
\begin{equation}
p_\nu \leftarrow e^{-\gamma\Delta t}p_\nu
+ \sqrt{k_B T\bigl(1-e^{-2\gamma\Delta t}\bigr)}\,\xi_\nu,
\qquad
\xi_\nu\sim\mathcal N(0,1),
\end{equation}
preserves the same Gaussian momentum marginal at the discrete level.
\end{theorem}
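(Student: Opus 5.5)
The plan is to treat the generator as the sum of a Hamiltonian part and an Ornstein--Uhlenbeck part and to show that each annihilates the band canonical density separately. For the Hamiltonian part, Lemma~\ref{lem:canonical_stationarity}(i) applies verbatim on the band phase space $\mathcal P_B$, with $H$ replaced by $H_B$. Since $\rho_{\beta,B}^{(B)}=\varphi(H_B)$ with $\varphi(x)=Z_B^{-1}e^{-\beta x}$, we get $\hat{\mathcal L}_{H_B}^\dagger\rho_{\beta,B}^{(B)}=-\{\varphi(H_B),H_B\}=0$. Here $Z_B<\infty$ is assumed, which requires $e^{-\beta U_B}$ to be integrable.

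For the OU part, I will first compute the adjoint by integrating by parts against compactly supported test functions. This gives
\[
\hat{\mathcal L}_{\gamma,B}^\dagger\rho=\sum_{\nu\in B}\gamma\,\partial_{p_\nu}\bigl(p_\nu\rho+k_BT\,\partial_{p_\nu}\rho\bigr).
\]
For $\rho=\rho_{\beta,B}^{(B)}$ we have $\partial_{p_\nu}\rho=-\beta p_\nu\rho$, so each probability flux $p_\nu\rho+k_BT\partial_{p_\nu}\rho$ vanishes identically. Hence the OU adjoint kills $\rho$ term by term, and linearity of the adjoint gives the stated invariance. The momentum marginal then follows exactly as in Lemma~\ref{lem:canonical_stationarity}(ii): the density factorises into $\prod_{\nu\in B}e^{-\beta p_\nu^2/2}$ times a function of $q_B$, which yields independent $\mathcal N(0,k_BT)$ momenta and $\langle\tfrac12p_\nu^2\rangle=\tfrac12k_BT$.

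For the velocity statement, write $\dot{\mathbf r}_B=U p_B$, where $U$ has columns $\mathbf u_\nu$; this is Gaussian with covariance $k_BT\,UU^{\mathsf T}$. Here $\mathbf u_\nu=M^{-1/2}\mathbf w_\nu$, so $UU^{\mathsf T}=M^{-1/2}W_BW_B^{\mathsf T}M^{-1/2}$. When $W_BW_B^{\mathsf T}$ is the identity on the reconstruction space, the covariance is $k_BTM^{-1}$, which gives $v_i\sim\mathcal N(0,k_BT/m_i)$. I expect this step to be the main obstacle. Because translations and rotations have been removed, $WW^{\mathsf T}$ is only the projector onto the vibrational subspace, not the identity on $\mathbb R^{3N}$. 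The claim therefore has to be read either with ``full vibrational subspace'' meaning that $W_BW_B^{\mathsf T}=I$ on the space being reconstructed, or as holding up to the rigid-body correction of order $1/N$. I would state the hypothesis as $W_BW_B^{\mathsf T}=I$ and note the correction explicitly.

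For the discrete OU update, I will show that it maps $\mathcal N(0,k_BT)$ to itself. If $p_\nu\sim\mathcal N(0,k_BT)$ and $\xi_\nu$ is independent of it, the updated momentum is a sum of independent centred Gaussians with variance $e^{-2\gamma\Delta t}k_BT+k_BT(1-e^{-2\gamma\Delta t})=k_BT$. The update acts on momenta only and does not depend on $q_B$. The conditional law of $p_B$ given $q_B$ is therefore preserved, and with it the full product density $\rho_{\beta,B}^{(B)}$, not just the momentum marginal.
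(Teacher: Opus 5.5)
Your proposal is correct and follows essentially the paper's route. The shared steps are Lemma~\ref{lem:canonical_stationarity}(i) for the Hamiltonian part, a direct check that the OU adjoint annihilates the Gaussian, the same velocity covariance $k_BT\,M^{-1/2}W_BW_B^{\mathsf T}M^{-1/2}$, and preservation of $\mathcal N(0,k_BT)$ by the discrete update. Two of your variations are mild improvements: the zero-flux form $p_\nu\rho+k_BT\,\partial_{p_\nu}\rho=0$ replaces the paper's factorise-and-expand computation, and your explicit variance calculation for the discrete step is more self-contained than the paper's appeal to the exact OU flow.

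Your caveat about the velocity law is also well taken. Once translations and rotations are removed, $W_BW_B^{\mathsf T}$ is only the vibrational projector. The paper's phrase ``$W_BW_B^{\top}=I$ on that space'' passes over exactly this point, so the exact law $v_i\sim\mathcal N(0,k_BT/m_i)$ does require the stronger hypothesis you propose.
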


\begin{proof}
We work on the band phase space $\mathcal P_B$ with canonical variables $(q_B,p_B)$.
Let the band Hamiltonian be of separable form
\[
H_B(q_B,p_B)=\sum_{\nu\in B}\tfrac12 p_\nu^2 + U_B(q_B),
\]
with Hamiltonian Liouville operator $\hat{\mathcal L}_{H_B}f=\{f,H_B\}$, and let
$\hat{\mathcal L}_{\gamma,B}$ be the Ornstein--Uhlenbeck generator acting only on $\{p_\nu\}_{\nu\in B}$ as in the theorem.
Denote the band canonical density by
\[
\rho_{\beta,B}^{(B)}(q_B,p_B)=Z_B^{-1}\exp\!\bigl(-\beta H_B(q_B,p_B)\bigr).
\]

\medskip\noindent
\emph{(1) Invariance of $\rho_{\beta,B}^{(B)}$.}
By Lemma~\ref{lem:canonical_stationarity}(i) applied on $\mathcal P_B$, any density of the form $\varphi(H_B)$
is stationary for the Hamiltonian flow generated by $H_B$, hence
\begin{equation}
  \hat{\mathcal L}_{H_B}^\dagger \rho_{\beta,B}^{(B)}=0.
\end{equation}
It remains to show that $\rho_{\beta,B}^{(B)}$ is stationary for the OU part.
Since $U_B$ is independent of $p_B$, the band canonical density factorises as
\begin{equation}
  \rho_{\beta,B}^{(B)}(q_B,p_B)
  = \rho_{\beta,B}^{(q)}(q_B)\,\rho_{\beta,B}^{(p)}(p_B),
  \qquad
  \rho_{\beta,B}^{(p)}(p_B)\propto \exp\!\left(-\beta\sum_{\nu\in B}\tfrac12 p_\nu^2\right).
\end{equation}
Because $\hat{\mathcal L}_{\gamma,B}$ differentiates only with respect to $p_\nu$, we have
\begin{equation}
  \hat{\mathcal L}_{\gamma,B}^\dagger \rho_{\beta,B}^{(B)}
  = \rho_{\beta,B}^{(q)}(q_B)\,\hat{\mathcal L}_{\gamma,B}^\dagger \rho_{\beta,B}^{(p)}(p_B).
\end{equation}
Thus it suffices to prove $\hat{\mathcal L}_{\gamma,B}^\dagger \rho_{\beta,B}^{(p)}=0$.

Fix $\nu\in B$ and write $\rho_{\beta,\nu}(p)\propto e^{-\beta p^2/2}$. The one-dimensional adjoint OU operator is
\begin{equation}
  \hat{\mathcal L}_{\gamma,\nu}^\dagger \rho
  =
  \gamma\,\partial_{p}(p\rho)
  + \gamma k_B T\,\partial_{p}^2\rho.
\end{equation}
A direct computation gives
\[
\partial_{p}\rho_{\beta,\nu}= -\beta p\,\rho_{\beta,\nu},
\qquad
\partial_{p}^2\rho_{\beta,\nu}=(-\beta+\beta^2 p^2)\rho_{\beta,\nu},
\]
and hence
\begin{align}
  \hat{\mathcal L}_{\gamma,\nu}^\dagger \rho_{\beta,\nu}
  &= \gamma\bigl(\rho_{\beta,\nu}+p\,\partial_{p}\rho_{\beta,\nu}\bigr)
     +\gamma k_B T\,\partial_{p}^2\rho_{\beta,\nu} \\
  &= \gamma\rho_{\beta,\nu}\Bigl[(1-\beta k_B T)+(-\beta+\beta^2 k_B T)p^2\Bigr].
\end{align}
With $\beta=(k_B T)^{-1}$ both coefficients vanish, so
\begin{equation}
  \hat{\mathcal L}_{\gamma,\nu}^\dagger \rho_{\beta,\nu}=0.
\end{equation}
Since $\rho_{\beta,B}^{(p)}(p_B)=\prod_{\nu\in B}\rho_{\beta,\nu}(p_\nu)$ and
$\hat{\mathcal L}_{\gamma,B}^\dagger=\sum_{\nu\in B}\hat{\mathcal L}_{\gamma,\nu}^\dagger$, it follows that
\begin{equation}
  \hat{\mathcal L}_{\gamma,B}^\dagger \rho_{\beta,B}^{(p)}=0,
  \qquad\text{and hence}\qquad
  \hat{\mathcal L}_{\gamma,B}^\dagger \rho_{\beta,B}^{(B)}=0.
\end{equation}
By linearity of the adjoint, we conclude
\begin{equation}
(\hat{\mathcal L}_{H_B}+\hat{\mathcal L}_{\gamma,B})^\dagger \rho_{\beta,B}^{(B)}=0.
\end{equation}

\medskip\noindent
\emph{(2) Modal Gaussians and induced velocity statistics.}
From the factorised momentum density \begin{equation*}
    \rho_{\beta,B}^{(p)}(p_B)\propto
\exp(-\beta\sum_{\nu\in B}\tfrac12 p_\nu^2)
\end{equation*} we immediately obtain, for each $\nu\in B$,
\begin{equation}
  p_\nu \sim \mathcal N(0,k_B T),
  \qquad
  \Bigl\langle \tfrac12 p_\nu^2 \Bigr\rangle_{\rho_{\beta,B}^{(B)}}=\tfrac12 k_B T,
\end{equation}
and summing over $\nu\in B$ yields $\langle \mathrm{KE}_B\rangle = \frac{|B|}{2}k_BT$.

Let $W_B$ denote the mass-weighted mode matrix restricted to the band $B$, and set
$\mathbf u:=M^{1/2}\mathbf v$ so that, for the (band) reconstruction used here,
\begin{equation}
  \mathbf u = W_B\,p_B.
\end{equation}
Since $p_B$ is centred Gaussian with covariance $k_B T\,I$, $\mathbf u$ is centred Gaussian with covariance
\begin{equation}
  \Sigma_u = k_B T\,W_B W_B^\top.
\end{equation}
Transforming back to Cartesian velocities $\mathbf v=M^{-1/2}\mathbf u$ gives a centred Gaussian
\begin{equation}
  \mathbf v \sim \mathcal N(0,\Sigma_v),
  \qquad
  \Sigma_v = k_B T\,M^{-1/2}W_B W_B^\top M^{-1/2}.
\end{equation}
If $B$ spans the complete mode basis used in the reconstruction, so that $W_BW_B^\top=I$ on that space,
then $\Sigma_v=k_B T\,M^{-1}$ and the induced velocity law is Maxwell--Boltzmann:
each Cartesian component satisfies $v_i\sim\mathcal N(0,k_B T/m_i)$ and
$\langle \tfrac12 m_i v_i^2\rangle=\tfrac12 k_B T$.

\medskip\noindent
Finally, the discrete update
\[
p_\nu \leftarrow e^{-\gamma\Delta t}p_\nu
+ \sqrt{k_B T\bigl(1-e^{-2\gamma\Delta t}\bigr)}\,\xi_\nu,
\qquad \xi_\nu\sim\mathcal N(0,1),
\]
is the exact time-$\Delta t$ solution map of the scalar Ornstein--Uhlenbeck process generated by
$\hat{\mathcal L}_{\gamma,\nu}$, and therefore preserves the invariant Gaussian marginal
$p_\nu\sim\mathcal N(0,k_B T)$ at the discrete level.
\end{proof}


\end{document}